\newtheorem{theorem}{Theorem}
\newtheorem{definition}{Definition}
\newtheorem{example}{Example}
\newtheorem{proposition}{Proposition}
\newcommand{\score}{{{\mathrm{sc}}}}
\newcommand{\reals}{\mathbb R}
\newcommand{\naturals}{\mathbb N}
\newcommand{\calA}{\mathcal{A}}
\newcommand{\calR}{\mathcal{R}}
\renewcommand{\H}{{{\mathrm{h}}}}
\newcommand{\SL}{{{\mathrm{s}}}}
\newcommand{\W}{{{\mathcal{W}}}}
\newcommand{\npav}{\bar n}
\newcommand{\sainte}{Sainte-Lagu\"e}
\newcommand{\monroe}{{{{\mathrm{Monroe}}}}}
\newcommand{\opt}{{{{\mathrm{opt}}}}}
\definecolor{OKgreen}{RGB}{3, 89, 37}
\definecolor{NOred}{RGB}{143, 6, 31}
\newcommand*\cmark{\textcolor{OKgreen}{\ding{51}}}
\newcommand*\xmark{\textcolor{NOred}{$\boldsymbol{\times}$}}
\newcommand{\phragmen}{Phragm\'{e}n}
\newcommand{\todo}[1]{{\textbf{TODO:}\ \textit{{#1}}\ \textbf{:ODOT}}}
\newtheorem*{rep@theorem}{\rep@title}
\newcommand{\newreptheorem}[2]{%
\newenvironment{rep#1}[1]{%
 \def\rep@title{#2 \ref{##1}}%
 \begin{rep@theorem}}%
 {\end{rep@theorem}}}
\def\2vec#1#2{\left(\begin{array}{c}{#1}\\{#2}\end{array}\right)}
\newcommand{\av}{{{\mathrm{av}}}}
\newcommand{\pav}{\mathrm{pav}}
\newcommand{\hamming}{\mathrm{ham}}
\newcommand{\slav}{\mathrm{slav}}
\newcommand{\phrag}{\mathrm{phrag}}
\newcommand{\spav}{\mathrm{spav}}
\newcommand{\pgeom}{\mathrm{p}\text{-}\mathrm{geom}}
\newcommand{\cc}{\mathrm{cc}}
\def\argmax{\mbox{argmax}}
\crefname{table}{Table}{Tables}
\crefname{figure}{Figure}{Figures}
\crefname{theorem}{Theorem}{Theorems}
\crefname{definition}{Definition}{Definitions}
\crefname{corollary}{Corollary}{Corollaries}
\crefname{observation}{Observation}{Observations}
\crefname{lemma}{Lemma}{Lemmas}
\crefname{example}{Example}{Examples}
\crefname{reduction}{Reduction}{Reductions}
\crefname{construction}{Construction}{Constructions}
\crefname{subsection}{Subsection}{Subsections}
\crefname{section}{Section}{Sections}
\crefname{proposition}{Proposition}{Propositions}
\crefname{algorithm}{Algorithm}{Algorithms}
\Crefname{equation}{Inequality}{Inequalities}
\title{Utilitarian Welfare and Representation Guarantees\\of Approval-Based Multiwinner Rules}
\author{Martin Lackner\\
  TU Wien\\
  Vienna, Austria\\
  {\small\texttt{lackner@dbai.tuwien.ac.at}}
  \and 
Piotr Skowron\\
  University of Warsaw\\
  Warsaw, Poland\\
  {\small\texttt{p.skowron@mimuw.edu.pl}}
}
\date{}
\begin{document}

\maketitle

\begin{abstract}
To choose a suitable multiwinner voting rule 
is a hard and ambiguous task.
Depending on the context, it varies widely what constitutes the choice of an ``optimal'' subset of alternatives.
In this paper, we provide 
a quantitative analysis of multiwinner voting rules using methods from the theory of approximation algorithms---we 
estimate how well multiwinner rules approximate two extreme objectives: 
a representation criterion defined via the Approval Chamberlin--Courant rule and a utilitarian criterion defined via Multiwinner Approval Voting.
With both theoretical and experimental methods, we classify multiwinner rules in terms of their quantitative alignment with these two opposing objectives.
Our results provide fundamental information about the nature of multiwinner rules and, in particular, about the necessary tradeoffs when choosing such a rule.
\end{abstract}

\section{Introduction}

A multiwinner rule is a voting method for selecting a fixed-size subset of alternatives, a so-called \emph{committee}. More formally, it is a function that---given (i) a set of candidates, (ii) preferences of a population of voters over these candidates, and (iii) an integer $k$---returns a subset of exactly $k$~candidates.
Multiwinner rules are applicable to problems from and beyond the political domain, for instance for selecting a representative body such as a parliament or university senate~\citep{ccElection, FSST-trends}, shortlisting candidates (e.g., in a competition)~\citep{barberaNonControversial}, designing search engines~\citep{dwo-kum-nao-siv:c:rank-aggregation,proprank}, building recommender systems~\citep{owaWinner}, and as mechanisms for locating facilities~\citep{FFG16}.

Ideally, a multiwinner rule should select the ``best'' committee, but the 
suitability of a chosen committee strongly depends on the specific context. For instance, if voters are experts (e.g., judges in a sports competition) whose preferences reflect their estimates of the objective quality of candidates, then the goal is typically to pick $k$ individually best candidates, e.g., those candidates who receive the highest scores from judges. Intuitively, in this and similar scenarios, the quality of candidates can be assessed separately, and a suitable multiwinner rule should pick the $k$ best-rated ones. On the contrary, if the voters are citizens and the goal is to choose locations for $k$ public facilities (say, hospitals), then our goal is very different: assessing the candidates separately can result in building all the facilities in one densely populated area; yet, it is preferable to spread them to ensure that as many citizens as possible have access to \emph{some} facility in their vicinity.

These two examples illustrate two very different goals of multiwinner rules, which can be informally described by two principles \citep{FSST-trends,lac-sko:t:abc-approval-multiwinner}:
The \emph{principle of diversity} states that a rule should select a committee that represents the voters as well as possible. Here, we mean by a representative of voter in a committee $W$ the one candidate that the voter prefers most among those from $W$. Informally speaking, a committee represents the voters well, if---on average---the voters are satisfied with their respective representatives in the committee; this translates to choosing a hospital distribution that covers as many citizens as possible.
The \emph{principle of individual excellence} suggests picking those candidates that individually receive the highest total support from the voters; this translates to selecting a group of best contestants in the previous example.
In other words, according to the principle of diversity, one should ensure that all voters are well-represented in the committee, whereas according to the principle of individual excellence the goal should be rather selecting the strongest candidates (paying no or little attention to individual voters).
The focus for the former principle is on the voters; for the latter on the candidates.

Many real-life scenarios do not clearly fall into one of the two categories. For example, rankings provided by a search engine should list the most relevant websites but also provide every user at least one helpful link.
In such cases, a mechanism designer would be interested in choosing a rule that guarantees a good representation of voters as well as some degree of excellence of the chosen candidates, putting more emphasis on one of them depending on the particular context.

Consequently, to choose a multiwinner voting rule in a principled fashion for a specific application, it is essential to understand to which degree established multiwinner rules adhere to one of these two principles. We propose two quantitative criteria that correspond to the principles of diversity and of individual excellence, 
and provide a classification that clarifies the behavior of well-known rules with respect to the two aforementioned principles.
In this paper, we work in the approval-based model, i.e., we assume that voters express their preferences by providing a set containing their approved candidates.
However, our approach is applicable to other preference models as well.

\subsection{Methodology}

In this paper, we consider the following two criteria in the approval-based model:
\begin{inparaenum}[(1)]
\item The utilitarian criterion counts the total number of approvals received by the selected candidates and thus measures their total support.
\item The representation criterion counts the number of voters represented by at least one member of the chosen committee, i.e., the number of voters with at least one approved committee member. 
\end{inparaenum} 

We investigate how well certain rules perform in terms of the two criteria. We provide their \emph{utilitarian} and \emph{representation guarantees} by measuring the ratio of the utilitarian and the representation values achieved by the elected committees divided by the values achieved by committees that are optimal for the respective criterion.
Such guarantees can be viewed as \emph{quantitative properties} of voting rules. This approach is different from the traditional axiomatic approach, which is qualitative: a voting rule can either satisfy a property (axiom) or not.
Our approach provides more fine-grained information and allows us to estimate the degree to which a certain property is satisfied.
With these methods, we understand voting rules as a compromise between different (often contradictory) goals. 

According to our model, there are two rules that---by definition---behave optimally with respect to utilitarian and representation guarantees: Approval Voting (AV) and the Approval Chamberlin--Courant (CC) rule, respectively.
This opens another perspective to look at our measures 
by an analogy to the concept of worst-case approximation often used in theoretical computer science:
the utilitarian and representation guarantees of a rule $\calR$ describe how well $\calR$ approximates AV and CC, respectively.

\medskip
Building upon the notions of utilitarian and representation guarantee, we conduct two main types of analyses: %

\paragraph{Worst-case guarantees:} We derive theoretical upper bounds on how much an outcome of the considered multiwinner rules can differ from the optimal solutions according to the representation and the utilitarian criterion. We call these bounds the \emph{representation guarantee} and \emph{utilitarian guarantee}, respectively.
Our guarantees are given as functions of the committee size $k$ and return values between 0 and 1. Intuitively, a higher representation guarantee (resp. utilitarian guarantee) indicates better performance in terms of the representation (resp. utilitarian) criterion, where 1 means that the rule always returns an optimal committee according to the respective principle.
\Cref{tab:guarantees_summary} summarizes our results (the names of the rules listed in the table are explained in \Cref{sec:prelim}). Our bounds of the utilitarian and the representation guarantee are mostly asymptotically tight---they give accurate estimations when the size of the committee $k$ is large. For small values of $k$ the gaps between our lower and upper bounds become more significant.   

{\setcellgapes{3pt}
\begin{table}
\makegapedcells
\centering
\begin{tabular}{l||cc|cc|c}
 & \multicolumn{2}{c|}{utilitarian guarantee} & \multicolumn{2}{c|}{representation guarantee} & Pareto \\ 
 & lower & upper & lower & upper & \\ 
\hline 
\hline
AV & 1 & 1 & $\frac{1}{k}$ & $\frac{1}{k}$ & \cmark \\ 
CC & $\frac{1}{k}$ & $\frac{1}{k}$ & 1 & 1 & \cmark \\ 
MAV & 0 & 0 & 0 & 0 & \cmark \\ 
seq-CC & $\frac{1}{k}$ & $\frac{1}{k}$ & $1 - \nicefrac 1 e$ & $1- (1 - \nicefrac 1 k)^k$ & \xmark \\ 
PAV & $\frac{1}{2 + \sqrt{k}}$ & $\frac{2}{\lfloor \sqrt{k} \rfloor} - \frac{1}{k}$ & $\frac{1}{2}$ & $\frac{1}{2} + \frac{1}{4k-2}$ & \cmark\\ 
SLAV & $\frac{1}{2 + \sqrt{2k}}$ & $\frac{3}{2\lfloor \sqrt{k} \rfloor}$ & $\frac{3}{5}$ & $\frac{2}{3}+\frac{1}{9k-6}$ & \cmark\\ 
$p$-Geometric & $\frac{\W(k \log(p))} {k \log(p) + \W(k \log(p))}$ & $\frac{1}{k} + \frac{2\W(k \log(p))}
{k \log(p)}$ & $\frac{p-1}{p}$ & $\frac{p}{p+\frac{k}{k+2}}$ & \cmark \\ 
seq-PAV & $\frac{1}{2\sqrt{k}}$ & $\frac{2}{\lfloor \sqrt{k} \rfloor} - \frac{1}{k}$ & $\frac{1}{\log(k) + 2}$ & $\frac{1}{2} + \frac{1}{4k-2}$ & \xmark \\ 
Monroe & $\frac{1}{k}$ & $\frac{1}{k}$  & $\frac{1}{2}$ & $\frac{1}{2} + \frac{1}{k-1}$ & \xmark \\ 
Greedy Monroe & $\frac{1}{k}$ & $\frac{1}{k}$  & $\frac{1}{2}$ & $\frac{1}{2} + \frac{1}{k-1}$ & \xmark \\ 
seq-Phragm\'{e}n & $\frac{1}{5\sqrt{k} + 1}$ & $\frac{2}{\lfloor \sqrt{k} \rfloor} - \frac{1}{k}$ & $\frac{1}{2} $ & $\frac{1}{2} + \frac{1}{4k-2}$ & \xmark \\
leximin-Phragm\'{e}n & $\frac{1}{k}$ & $\frac{1}{k}$  & $\frac{1}{2} $ & $\frac{1}{2} + \frac{1}{4k-2}$ & \xmark
\end{tabular} 
\caption{Summary of worst-case guarantees for the considered multiwinner rules. The definitions of the rules listed in the table are provided in \Cref{sec:prelim}. The guarantees are functions of the committee size $k$. A higher value means a better guarantee, with~$1$ denoting optimal performance. In most cases we could only find (accurate) estimates instead of the exact values of the guarantees: the ``lower'' and ``upper'' values in the table denote that the respective guarantee is between these two bounds.  The guarantees for the $p$-Geometric rule are visualized in \Cref{fig:pgeom} (page~\pageref{fig:pgeom}). The column ``Pareto'' indicates whether the rule satisfies the Pareto efficiency axiom as discussed in \Cref{sec:efficiency_axiom}.}\label{tab:guarantees_summary}
\end{table}}

\paragraph{Average-case performance:} We complement the worst-case analysis with an experimental study yielding average ratios for several datasets. In extensive experiments based on real-world data and numerical simulations, we estimate how on average the outcomes of the considered rules differ from the optimal committees with respect to the utilitarian and the representation criterion.
\bigskip

Let us now explain in  more detail our choice of criteria, i.e., the choice to use AV and CC as cornerstones of our analysis, and the underlying assumptions we make.
First,  AV embodies the ideas of utilitarian maximization (as in the classic works on collective utility functions, see, e.g., the book by \citet{moulinAxioms}); CC 
is the most natural (and simplest) rule in the approval-based model that maximizes representation.
Second, AV and CC can be axiomatically characterized by properties that capture the principle of individual excellence and the principle of diversity~\citep{lac-sko:t:abc-approval-multiwinner}.
Third, both AV and CC are well-known, well-understood rules.
Fourth, due to their orthogonal nature they are good choices for a two-dimensional evaluation.

The rationale for using AV and CC is based on an important underlying assumption:
we assume that the amount of utility a voter derives from a committee $W$ is linear in the number of candidates in $W$ that the voter approves. Taking this utility-based viewpoint, the utilitarian guarantee of a rule~$\calR$ measures the performance of~$\calR$ with respect to utilitarian social welfare (i.e., the sum of utilities). 
Furthermore, the representation guarantee measures a voting rule's effort to guarantee as many voters as possible non-zero utility.
Thus, a voting rule with a high representation guarantee ensures that the preferences of as many voters as possible are taken---at least minimally---into account.
In this light, the representation guarantee can be viewed as an egalitarian criterion; we discuss this and other egalitarian criteria in more detail in Section~\ref{sec:concl}.

\subsection{Contribution}

The most important findings of our analyses can be summarized as follows:
\begin{itemize}
\item Among the studied voting rules, we identify six rules that provide a particularly appealing compromise between the utilitarian and representation criteria: Proportional Approval Voting (PAV), its sequential variants seq-PAV and rev-seq-PAV, \phragmen's sequential rule, the $2$-Geometric rule, and \sainte\ Approval Voting (SLAV).
These rules exceed in both the worst-case and average-case analysis.

\item The latter of these rules, \sainte\ Approval Voting (SLAV), is a new approval-based multiwinner rule. It is similar to PAV (in that it is a Thiele method, cf.~\citealt{lac-sko:t:abc-approval-multiwinner}) but is based on the \sainte\ apportionment method instead of D'Hondt apportionment (as PAV does). This modification strengthens the representation guarantee of SLAV compared to PAV, while slightly reducing its utilitarian efficiency as observed in the experiments.
Nonetheless, its (worst-case) utilitarian guarantee is of the same (asymptotic) order as that of PAV, and it is significantly higher than that of CC.

\item Subject to utilitarian and the representation guarantees, our results reveal major deficits of prominent multiwinner rules: Monroe, its sequential counterpart, \phragmen's leximin rule (leximin-\phragmen), and Minimax Approval Voting (MAV). In particular, the contrast between PAV and Monroe is noteworthy, as both are known as proportional rules \citep{justifiedRepresenattion,pjr17}. 
We note that one should judge multiwinner rules from different angles and our criteria provide only one specific viewpoint; however, if one considers the utilitarian and the representation guarantees as important objectives, then the loss of the utilitarian and the representation guarantee of the aforementioned rules is striking and speaks against the use of these rules.

\item We observe that rules that are more expensive to compute---PAV, Monroe, CC, and leximin-\phragmen\ are NP-hard \citep{owaWinner,azi-gas-gud-mac-mat-wal:c:multiwinner-approval,complexityProportionalRepr,aaai/BrillFJL17-phragmen}---do not necessarily achieve a good compromise between AV and CC: while PAV achieves an excellent compromise and CC is by definition optimal with respect to one criterion, this is contrasted by Monroe and leximin-\phragmen, which are computationally expensive without a clear benefit in this regard.

\item Also the 2-Geometric rule achieves a very good compromise between the two criteria.
More generally, we show that the $p$-Geometric rule spans the whole spectrum from AV to CC, controlled through the parameter $p$. Hence, by adjusting the parameter $p$, one can obtain any desired compromise between utilitarian and representation guarantees. 

\item Finally, we show that while proportional rules tend to achieve a good compromise between utilitarian and representation objectives, proportionality does not yield an optimal compromise: we find a non-proportional rule that provides better utilitarian and representation guarantees than \emph{arbitrary} proportional rules. Thus, proportionality should be viewed as a third standalone criterion rather than as a compromise between utilitarian and representation objectives.

\end{itemize}

At the end of the paper (\Cref{sec:efficiency_axiom}), we complement our results with an analysis of an efficiency axiom, which can be viewed as an incarnation of Pareto efficiency in the context of multiwinner elections. We say that a committee $W_1$ dominates a committee $W_2$ if each voter approves as many members of $W_1$ as of $W_2$ and some voter approves strictly more members of $W_1$ than of $W_2$. Pareto efficiency states that a rule should never select a dominated committee; thus Pareto efficiency could be viewed as a basic requirement for any utilitarian rule. Since Pareto efficiency appears to be very fundamental, it may come as a surprise that many known rules do not satisfy this property (in particular, Monroe's rule, the Phragm\'{e}n rules, and all sequential rules; see \Cref{tab:guarantees_summary}). Further, we observe that Pareto efficiency is unsuitable to distinguish rules with strong and weak utilitarian guarantees. We view this discrepancy as yet another argument why analyzing utilitarian and representation guarantees of multiwinner rules is important for better understanding their nature, and why a quantitative approach is required.

\subsection{Related Work}

Our work is based on the idea of approximation algorithms, where computationally hard problems are solved by polynomial-time algorithms that can guarantee a certain (imperfect) solution quality. In our paper, we study how well popular multiwinner rules can approximate other, archetypical rules. 
\citet{pro-ros:c:distortion} evaluate the quality of social choice decisions (and so, the quality of voting rules) by analyzing how well voting rules based on ordinal preferences can approximate optimal decisions based on cardinal utilities.
The same question for randomized social choice functions has been explored by~\citet{BCHLPS:randomizedDistortion}. In a related facility-location model the idea of approximation is used to reason about the quality of strategy-proof mechanisms~\citep{ProTen:approxDesignNoMoney}. 
The work of \citet{branzei2013bad} on the dynamic price of anarchy can also be viewed from such a perspective: to which degree can the outcome of voting rules based on sincere preferences be approximated by the same voting rules with insincere preference (obtained via ``selfish'' best-response dynamics)?
In a similar vein, \citet{oren2014online} study the performance of online social choice procedures in comparison to optimal (offline) procedures.
\citet{anshelevich2018approximating} approximate an optimal social choice in a metric model with voting rules using rankings as input, i.e., using limited information.

The normative study of multiwinner election rules typically focuses on axiomatic analysis. Recent work on the axiomatic analysis of approval-based rules has focused on particularly on axioms describing proportionality
\citep{justifiedRepresenattion, pjr17, aaai/BrillFJL17-phragmen, proprank, AEHLSS18, SFF16a, lac-sko:t:abc-approval-multiwinner}, strategyproofness \citep{pet:prop-sp,lac-sko:t:multiwinner-strategyproofness,KluivingEtAlECAI2020} and monotonicity \citep{sanchez2019monotonicity}. Similar axiomatic properties for the ordinal model have been discussed by \citet{dum:b:voting}, \citet{monroeElection}, \citet{ccElection}, \citet{elk-fal-sko-sli:c:multiwinner-rules}, and \citet{AEFLS17:multiwinner-condorcet}; and for the model with weak preferences by \citet{BaumeisterBRSS16} and \citet{aziz2020expanding}. For an overview of multiwinner rules in general, with the focus on the ideas of individual excellence, diversity, and proportionality, we refer the reader to a survey by \citet{FSST-trends}.

Another approach to understanding the nature of different multiwinner rules is to analyze how these rules behave on certain subdomains of preferences, where their behavior is much easier to interpret, e.g., on two-dimensional geometric preferences~\citep{2dMultiwinner}, on party-list profiles~\citep{BLS17a}, or on single-peaked and single-crossing domains~\citep{AEFLS17:multiwinner-condorcet}. Other approaches include analyzing certain aspects of multiwinner rules in specifically designed probabilistic models \citep{laslier2016StrategicVoting, RePEc:ucp:jpolec:doi:10.1086/670380, skow:multiwinner-models, ProcacciaRS12}, quantifying regret and distortion in utilitarian models~\citep{CNPS16a}, assessing their robustness~\citep{sagt/BredereckFKNST17}, and evaluating them based on data collected from surveys~\citep{RFZ88,LasStr17}.

\subsection{Structure of the Paper}

In \Cref{sec:prelim} we provide basic definitions and introduce the multiwinner rules relevant for our study.
\Cref{sec:worst_case_guarantees} constitutes the core of our  worst-case analysis, where we establish guarantees for the quality of committees selected by multiwinner rules.
The reader could wonder whether proportionality can be viewed as an optimal compromise between utilitarian and representation objectives. We disprove this thesis in \Cref{sec:propcompromise} and show that proportional rules cannot provide as good utilitarian and representation guarantees as rules specifically designed to provide a compromise between these two objectives.
The average-case counterpart of our analysis, based on numerical simulations and real-world datasets, is presented in \Cref{sec:average_guarantees}.
In \Cref{sec:efficiency_axiom}, we discuss Pareto efficiency as an example to highlight the difficulty of performing an analysis similar to ours in a strict axiomatic framework.
We conclude the paper with directions for future research in \Cref{sec:concl}.
Proof details and additional details concerning our experimental analysis have been delegated to the appendix. 

\medskip

This work is based on the short paper titled ``\emph{A quantitative analysis of multi-winner rules}'' that appeared in the proceedings of the 28th International Joint Conference on Artificial Intelligence (IJCAI 2019) \citep{ijcai/LS-quantitative}.

\section{Basic Definitions and Multiwinner Voting Rules}\label{sec:prelim}

For each $t \in \naturals$, we let $[t] = \{1, \ldots, t\}$. For a set $X$, we write $\mathcal{S}(X)$ to denote the powerset of $X$, i.e., the set of all subsets of $X$. By $\mathcal{S}_k(X)$ we denote the set of all $k$-element subsets of $X$.

Let $C = \{c_1, \ldots, c_m\}$ and $N = \{1, \ldots, n\}$ be  sets of $m$ \emph{candidates} and $n$ \emph{voters}, respectively.
Voters reveal their preferences by indicating which candidates they like: by $A(i) \subseteq C$ we denote the \emph{approval set} of voter~$i$ (that is, the set of candidates that~$i$ approves of).
For a candidate $c \in C$, by $N(c) \subseteq N$ we denote the set of voters who approve~$c$.
Given a set of candidates $X\subseteq C$, we write $N(X)$ to denote the set of voters who approve at least one candidate in $X$, that is $N(X) = \{i\in N: X\cap A(i)\neq \emptyset\}$.
We call the collection of approval sets $A=(A(1), A(2), \dots, A(n))$, one per each voter, an \emph{approval profile}. 
We use the symbol $\mathcal{A}$ to represent the set of all possible approval profiles.

We call the elements of $\mathcal{S}_k(C)$ size-$k$ \emph{committees}. Throughout the paper, we use the symbol $k$ to represent the desired size of the committee to be elected. An \emph{approval-based committee rule} (in short, an ABC rule) is a function $\calR \colon \mathcal{A} \times \naturals \to \mathcal{S}(\mathcal{S}_k(C))$ that takes as an input an approval profile and an integer $k \in \naturals$ (the required committee size), and returns a set of size-$k$ committees.\footnote{Rules which for some profiles return multiple committees as tied winners are often called irresolute. In practice, one usually uses some tie-breaking mechanism to single out a winning committee.} For a given committee $W$ and a voter $i$, we will often refer to the candidates in $W \cap A(i)$ as \emph{representatives} of $i$ in $W$ (sometimes we will omit $W$ when it will be clear from the context). Below, we recall the definitions of ABC rules which are the objects of our study. 

\begin{description}
\item[Multiwinner Approval Voting (AV).] This rule selects $k$ candidates which are approved by most voters. More formally, for a profile $A$ the AV-score of committee $W$ is defined as $\score_{\av}(A, W) = \sum_{c \in W}|N(c)|$, and AV selects committees $W$ that maximize $\score_{\av}(A, W)$.  
\item[Approval Chamberlin--Courant (CC).] For a profile $A$ we define the CC-score of a committee~$W$ as $\score_{\cc}(A, W)  = \sum_{i \in N}\min\big(1, |W \cap A(i)|\big)= |N(W)|$; CC outputs committees $W$ that maximize $\score_{\cc}(A, W)$. In words, CC aims at finding a committee $W$ such that as many voters as possible have their representatives in $W$. The CC rule was first mentioned by Thiele~\citep{Thie95a}, and then introduced in a more general context by Chamberlin and Courant~\citep{ccElection}.     
\item[Proportional Approval Voting (PAV).] This rule~\citep{Thie95a} selects committees with the highest PAV-scores, defined as $\score_{\pav}(A, W) = \sum_{i \in N} \H\left(|W \cap A(i)|\right)$, where $\H(t)$ is the $t$-th harmonic number, i.e., $\H(t) = \sum_{i=1}^{t} \nicefrac{1}{i}$. By using the harmonic function $\H(\cdot)$, voters who already have more representatives in the committee get less voting power than those with fewer representatives. While using other concave functions instead of $\H(\cdot)$ would give similar effects, the harmonic function is particularly well justified---it implies a number of appealing properties of the rule~\citep{justifiedRepresenattion,AEHLSS18}, and it allows one to view PAV as an extension of the D'Hondt method~\citep{BLS17a,lac-sko:t:abc-approval-multiwinner}.
\item[$\boldsymbol{p}$-Geometric.] This rule~\citep{owaWinner} is defined analogously to PAV but uses an exponentially decreasing function instead of $\H(\cdot)$. Formally, for a given parameter $p\geq1$ the $p$-Geometric rule assigns to each committee $W$ the score $\score_{p\text{-}\mathrm{geom}}(A, W) = \sum_{i \in N} \sum_{j = 1}^{|W \cap A(i)|} \frac{1}{p^{j}}$, and picks the committees with the highest scores. Note that the $1$-Geometric rule is simply AV. 

\item[Sainte-Lagu\"e Approval Voting (SLAV).] This is a rule that---to the best of our knowledge---has not been mentioned in the literature so far. It is a straight-forward generalization from the Sainte-Lagu\"e (or Webster) apportionment method~\citep{BaYo82a} to the approval-based multiwinner setting in the same way that PAV can be viewed as an extension of the D'Hondt apportionment method \citep{BLS17a,lac-sko:t:abc-approval-multiwinner}.
The Sainte-Lagu\"e apportionment method distinguishes itself by fairly treating smaller and large parties, in contrast to D'Hondt which decides in favor of larger ones.
Sainte-Lagu\"e Approval Voting is defined analogously to PAV:
SLAV selects committees with the highest SLAV-scores, defined as $\score_{\slav}(A, W) = \sum_{i \in N} \SL\left(|W \cap A(i)|\right)$, where $\SL(t) = \sum_{i=1}^{t} \nicefrac{1}{2i-1}$.

\end{description}

We remark that all of the aforementioned rules belong to the class of Thiele methods~\citep{FSST-trends,lac-sko:t:abc-approval-multiwinner}. These are parameterized by a (non-decreasing) function $f:\naturals\to\naturals$. The score of a committee $W$ subject to a profile $A$ is defined as $\score_{\pav}(A, W) = \sum_{i \in N} f\left(|W \cap A(i)|\right)$; the $f$-Thiele method returns committees with maximum score.
The following rules do not belong to this class.

\begin{description}

\item[Sequential CC/PAV/SLAV/$\boldsymbol{p}$-Geometric.] For each rule $\calR \in \{\mathrm{CC}, \mathrm{AV}, \mathrm{PAV}, \mathrm{SLAV},$ $p\text{-}\mathrm{Geometric}\}$, we define its sequential variant, denoted as seq-$\calR$, as follows. We start with an empty solution $W = \emptyset$ and in each of the $k$ consecutive steps we add to $W$ a candidate $c$ that maximizes $\score_{\calR}(A, W \cup \{c\})$, i.e., the candidate that improves the committee's score most. We break ties with an arbitrary but fixed order among candidates.

\item[Reverse Sequential CC/PAV/SLAV/$\boldsymbol{p}$-Geometric.] For each rule $\calR \in \{\mathrm{CC}, \mathrm{AV}, \mathrm{PAV}, \mathrm{SLAV},$ $p\text{-}\mathrm{Geometric}\}$, we define Reverse Sequential $\calR$, abbreviated as rev-seq-$\calR$, as follows. Given an approval profile $A$, the rule starts with a full set of candidates $W = C$, and in each step it removes from $W$ the candidate $c$ that maximizes $\score_{\calR}(A, W \setminus \{c\})$. The rule stops when there are $k$ candidates left in $W$. As previously, we break ties with a fixed order. In this paper, we only consider Reverse Sequential PAV.

\item[Monroe.] Monroe's rule~\citep{monroeElection}, similarly to CC, aims at maximizing the number of voters who are represented in the elected committee. The difference is that for calculating the score of a committee, Monroe additionally imposes that each candidate should be responsible for representing roughly the same number of voters. 
Formally, a Monroe assignment of the voters to a committee $W$ is a function $\phi \colon N \to W$ such that each candidate $c \in W$ is assigned roughly the same number of voters,
i.e., that $\lfloor \nicefrac{n}{k} \rfloor \leq |\phi^{-1}(c)| \leq \lceil \nicefrac{n}{k} \rceil$. Let $\Phi(W)$ be the set of all possible Monroe assignments to $W$.
The Monroe-score of $W$ is defined as
$\score_{\monroe}(A, W) = \max_{\phi \in \Phi(W)} \sum_{i \in N} |\{\phi(i)\} \cap A(i)|$;
the rule returns $\argmax_{W} \score_{\monroe}(A, W)$.
\item[Greedy Monroe~\citep{sko-fal-sli:j:multiwinner,elk-fal-sko-sli:c:multiwinner-rules}.] This is a sequential variant of Monroe's rule. It proceeds in $k$ steps: In each step it selects a candidate $c$ and a group $G$ of $\lfloor \nicefrac{n}{k} \rfloor$ or $\lceil \nicefrac{n}{k} \rceil$ not-yet removed voters\footnote{To be precise, for $n=k\cdot \lfloor \nicefrac{n}{k} \rfloor+c$, the first $c$ groups of voters to be removed have size $\lceil \nicefrac{n}{k} \rceil$ and the remaining $k-c$ have size $\lfloor \nicefrac{n}{k} \rfloor$.} so that $|N(c) \cap G|$ is maximal; next, candidate $c$ is added to the winning committee and the voters from $G$ are removed from further consideration. 
\item[\phragmen's Sequential Rule (seq-\phragmen).] Perhaps the easiest way to define the family of \phragmen's rules~\citep{Phra94a,Phra95a,Phra96a,Janson16arxiv,aaai/BrillFJL17-phragmen} is by describing them as load distribution procedures. We assume that each selected committee member $c$ is associated with one unit of load that needs to be distributed among those voters who approve~$c$ (though it does not have to be distributed equally). Seq-\phragmen{} proceeds in $k$ steps. In each step it selects one candidate and distributes its load as follows: let $\ell_j(i-1)$ denote the total load assigned to voter~$j$ just before the $i$-th step ($\ell_j(-1) = 0$ for each $j$). In the $i$-th step the rule selects a candidate $c$ and finds a load distribution $\{x_j\colon j\in N\}$ that satisfies the following three conditions:
\begin{inparaenum}[(1)]
\item $x_j > 0$ implies that voter $j$ approves $c$,
\item $\sum_{j \in N} x_j = 1$, and
\item the maximum load assigned to a voter, $\max_{j\in N} (\ell_j(i-1) + x_j)$, is minimized.
\end{inparaenum}
The new total load assigned to a voter $j \in N$ after the $i$-th step is $\ell_j(i) = \ell_j(i-1) + x_j$. 
\item[\phragmen's Leximin Rule (leximin-\phragmen)~\citep{aaai/BrillFJL17-phragmen}.] This is a variant of \phragmen's rules where committee members and their associated load distributions are chosen simultaneously in a single step, i.e., by solving an optimization problem. Similarly, as in the case of its sequential counterpart, the goal is to find a committee and an associated load distribution that minimizes the load of the voter with the highest load.
\phragmen's leximin rule uses a lexicographic tie-breaking if two committees have the same highest load, then the second highest load is compared, etc. We refer the reader to the description of \citet{aaai/BrillFJL17-phragmen} for details.
\item[Minimax Approval Voting (MAV) \citep{minimaxProcedure}.] Given two subsets, $X, Y \subseteq C$, we define the Hamming distance between $X$ and $Y$ as the size of their symmetric difference:
$d_{\hamming}(X, Y) = |X \setminus Y| + |Y \setminus X|$. MAV selects committees~$W$ that minimize the largest Hamming distance among all voters, i.e., MAV minimizes $\max_{i \in N}d_{\hamming}(A(i), W)$.
\end{description}

\section{Worst-Case Guarantees of Multiwinner Rules}\label{sec:worst_case_guarantees}

Intuitively, the utilitarian objective cares about selecting candidates who receive the highest total support from the population of voters, and the representation one cares mostly about representing the minorities in the elected committee. The two objectives specify two important, but rather opposite criteria in the design of multiwinner rules. Indeed, the Chamberlin--Courant rule and Approval Voting---the rules that return optimal committees with respect to the representation and the utilitarian objectives, respectively---can be viewed as two extreme points in the spectrum of multiwinner rules~\citep{BLS17a, 2dMultiwinner, FSST-trends, lac-sko:t:abc-approval-multiwinner}. We include a simple example which illustrates the difference between AV and CC, and so between the two opposite criteria.

\begin{example}
Consider a profile where 30 voters approve candidates $\{c_1, c_2, c_3\}$, 20 voters approve $\{c_4, c_5, c_6\}$, and 5 voters approve $\{c_7, c_8, c_9\}$. Let $k = 3$. For this profile AV selects candidates $W_{\av}=\{c_1, c_2, c_3\}$, while CC selects the committee $W_{\cc}=\{c_1, c_4, c_7\}$ (among others). \label{ex1}
These two committees differ significantly: while $W_{\av}$ has an (optimal) AV-score of 90, committee $W_\cc$ has only an AV-score of 55. In contrast, $W_{\cc}$ has an (optimal) CC-score of 55, while $W_\av$ has only a CC-score of 30.
An compromise between these committees would be, e.g., $\{c_1,c_2,c_4\}$. This committee has an AV-score of 80 and a CC-score of 50, both values near optimal.
Note that this compromise committee can be viewed as proportional; we return to the relationship between proportionality, AV-, and CC-scores in \Cref{sec:propcompromise}.
\end{example}

In this section, we analyze the multiwinner rules from \Cref{sec:prelim} with respect to how well they perform in terms of the utilitarian and the representation objectives. In our study we use the established idea of approximation from computer science: we estimate how well a given rule $\calR$ approximates each of the two objectives. This differs from the typical use of the idea of approximation in the following aspects:
\begin{inparaenum}[(1)]
\item We do not seek new algorithms approximating a given objective function as well as possible, but rather analyze how well the existing known rules approximate given functions (even if it is apparent that better and simpler approximation algorithms exist, these algorithms might not share other important properties of the considered rules).
\item We are not approximating computationally hard objectives with rules that are easier to compute. On the contrary, we will be even investigating how computationally hard rules (such as PAV, Monroe, etc.) approximate the algorithmically trivial AV rule. 
\end{inparaenum}

\begin{definition}
Recall that for a profile $A$, $\score_{\av}(A, W)$ and $\score_{\cc}(A, W)$ denote the AV-score and CC-score of committee $W$, respectively (i.e., the total number of approvals the committee garners from the voters, and the total number of voters with representatives in the committee). 
The \emph{utilitarian guarantee} of an ABC rule $\calR$ is a function $\kappa_{\av} \colon \naturals \to [0,1]$ that takes as input an integer $k$, representing the size of the committee, and is defined as: 
\begin{align*}
\kappa_{\av}(k) = \inf_{A \in \mathcal{A}} \frac{\min_{W\in \calR(A, k)}\score_{\av}(A,W)}{\max_{W \in \mathcal{S}_k(C)} \score_{\av}(A, W)} \text{.}
\end{align*}
Analogously, the \emph{representation guarantee} of $\calR$ is defined by
\begin{align*}
\kappa_{\cc}(k) = \inf_{A \in \mathcal{A}} \frac{\min_{W\in \calR(A, k)}\score_{\cc}(A,W)}{\max_{W \in \mathcal{S}_k(C)} \score_{\cc}(A, W)} \text{.}
\end{align*}
\end{definition}

The utilitarian and representation guarantees can be viewed as quantitative properties of multiwinner rules. In comparison with the traditional qualitative approach (analyzing properties which can be either satisfied or not), a quantitative analysis provides much more fine-grained information regarding the behavior of a rule with respect to some normative criterion.  
In the remaining part of this section we evaluate the previously defined rules against their utilitarian and representation guarantees. 

Clearly, the utilitarian guarantee of Approval Voting and the representation guarantee of the Chamberlin--Courant rule are the constant-one function. We start by establishing the utilitarian guarantee of CC and vice versa.
As mentioned before, we focus on the presentation of our results in the main text and thus defer most proofs to the appendix.
A few proofs are presented here to give a flavor of the involved arguments and constructions.

\begin{proposition}\label{prop:AvGuaranteeOfCC}
The representation guarantee of AV is $\nicefrac{1}{k}$.
\end{proposition}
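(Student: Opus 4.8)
The plan is to establish the two matching bounds $\kappa_{\cc}(k) \ge \nicefrac{1}{k}$ and $\kappa_{\cc}(k) \le \nicefrac{1}{k}$ separately. Throughout, recall that $\score_{\cc}(A,W) = |N(W)|$, i.e.\ the CC-score is simply the coverage of~$W$.

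For the lower bound, the key observation is that every committee $W$ returned by AV must contain a candidate of maximal approval. Write $M = \max_{c \in C} |N(c)|$. An AV winner consists of $k$ candidates realizing the $k$ largest individual approval scores; if such a committee contained no candidate of score~$M$, then swapping its least-approved member for a maximal-approval candidate outside it would strictly raise $\score_{\av}$, contradicting optimality. Hence every AV winner $W$ contains some $c^\ast$ with $|N(c^\ast)| = M$, so $\score_{\cc}(A,W) = |N(W)| \ge |N(c^\ast)| = M$, and therefore $\min_{W \in \calR(A,k)} \score_{\cc}(A,W) \ge M$. On the other hand, by the union bound every size-$k$ committee $W'$ satisfies $\score_{\cc}(A,W') = |N(W')| \le \sum_{c \in W'} |N(c)| \le kM$. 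Dividing, the ratio in the definition of the guarantee is at least $M/(kM) = \nicefrac{1}{k}$ for \emph{every} profile, giving $\kappa_{\cc}(k) \ge \nicefrac{1}{k}$.

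For the matching upper bound, I would exhibit a single profile on which the ratio equals $\nicefrac{1}{k}$ exactly, exploiting that AV is irresolute and that the guarantee takes the minimum over all tied winners. Take $2k$ candidates $a_1, \dots, a_k, b_1, \dots, b_k$ and partition the voters into $k+1$ disjoint blocks $V_0, V_1, \dots, V_k$, each of size $t$. Let every voter in $V_0$ approve exactly $\{a_1, \dots, a_k\}$, and for each $j \in [k]$ let every voter in $V_j$ approve exactly $\{b_j\}$. Then every candidate is approved by exactly $t$ voters, so every size-$k$ committee is an AV winner; in particular $\{a_1, \dots, a_k\}$ is one, and it covers only $V_0$, so its CC-score is $t$. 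Meanwhile $\{b_1, \dots, b_k\}$ covers the disjoint blocks $V_1, \dots, V_k$ and attains CC-score $kt$, which is optimal (any size-$k$ committee meets at most $k$ of the $k+1$ blocks, each of size~$t$). Hence for this profile $\min_{W \in \calR(A,k)} \score_{\cc}(A,W) \big/ \max_{W'} \score_{\cc}(A,W') = t/(kt) = \nicefrac{1}{k}$, so $\kappa_{\cc}(k) \le \nicefrac{1}{k}$, and combining the two bounds yields the claim.

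I do not anticipate a genuine obstacle. The only steps requiring care are the tie-handling in the lower bound, namely arguing that \emph{every} AV winner (not just a canonical one) contains a maximal-approval candidate, and the corresponding remark in the construction that, since the guarantee uses $\min_{W \in \calR(A,k)}$, we may charge AV with its least diverse tied committee $\{a_1,\dots,a_k\}$.
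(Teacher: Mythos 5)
Your proof is correct, and the lower-bound half coincides with the paper's: both rest on the facts that any AV-winning committee contains a candidate of maximal approval $M$ and that no size-$k$ committee can cover more than $kM$ voters; you additionally spell out the swap argument justifying the first fact, which the paper states without proof. The upper-bound half takes a genuinely different route. The paper perturbs the profile so that the AV winner is \emph{unique}: $k$ disjoint voter groups of sizes $x+1, x, \dots, x$, each approving its own block of $k$ candidates, so AV is forced to select the first block, and the resulting ratio $(x+1)/(kx+1)$ only tends to $\nicefrac{1}{k}$ as $x \to \infty$. You instead arrange for \emph{every} size-$k$ committee to be a tied AV winner and invoke the $\min_{W \in \calR(A,k)}$ in the definition of the guarantee to charge AV with its least diverse tied committee $\{a_1,\dots,a_k\}$; this attains the ratio $\nicefrac{1}{k}$ exactly on a single profile, which is arguably cleaner, but it leans entirely on AV's irresoluteness. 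The trade-off is robustness: the paper's construction survives any tie-breaking refinement of AV (so it also bounds the guarantee of resolute AV variants), whereas yours collapses under tie-breaking --- a resolute AV that happens to break ties toward $\{b_1,\dots,b_k\}$ achieves ratio $1$ on your profile. Both arguments are valid under the paper's definition of the CC-guarantee.
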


\begin{proof}
Consider an approval profile $A$, and let $W_{\av}$ be an AV-winning committee for $A$. We know that $W_{\av}$ contains a candidate who is approved by most voters---let us call such a candidate~$c_{\max}$. Clearly, it holds that $\score_{\cc}(A, W_{\av}) \geq |N(c_{\max})|$. Further, for any size-$k$ committee $W \subseteq C$ we have that $\score_{\cc}(A, W) \leq k|N(c_{\max})|$, which proves that the utilitarian guarantee of CC is at least $\nicefrac{1}{k}$. 

To see that the representation guarantee cannot be higher than $\nicefrac{1}{k}$, consider a family of profiles where the set of voters can be divided into $k$ disjoint groups: $N_1, N_2, \ldots, N_k$, with $|N_1| = x+1$ and $|N_i| = x$ for $i \geq 2$, for some large value $x$. Assume that $m = k^2$ and that all voters from $N_i$ approve candidates $c_{(i-1)k + 1}, c_{(i-1)k + 2}, \ldots, c_{ik}$. For this profile AV selects committee $\{c_1, \ldots, c_k\}$ with a CC-score of $x+1$. An optimal CC committee is, e.g., $\{c_1, c_{k+1}, \ldots, c_{k(k-1) + 1}\}$, with a CC-score of $kx + 1$.
\end{proof}

\begin{proposition}\label{prop:CCGuaranteeOfAV}
The utilitarian guarantee of CC and seq-CC is $\nicefrac{1}{k}$.
\end{proposition}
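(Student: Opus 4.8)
The plan is to establish the AV-guarantee of CC (and of seq-CC) is exactly $\nicefrac{1}{k}$ by proving a matching lower bound and upper bound. For the \emph{lower bound}, I would argue that any committee $W$ returned by CC (or seq-CC) must achieve an AV-score that is at least a $\nicefrac{1}{k}$ fraction of the optimal AV-score. The key observation is that a CC-winning committee represents at least $|N(W_{\cc})| = \score_{\cc}(A, W_{\cc})$ voters, and since CC maximizes the number of represented voters, $\score_{\cc}(A, W_{\cc}) \geq \score_{\cc}(A, W_{\av}) \geq |N(c_{\max})|$, where $c_{\max}$ is a most-approved candidate. The challenge is to relate the CC-winner's AV-score (not its CC-score) to the optimal AV-score. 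Here I would use the fact that each selected candidate is approved by at most $|N(c_{\max})|$ voters, so $\max_W \score_{\av}(A, W) \leq k \cdot |N(c_{\max})|$. For the lower bound on the numerator, I expect to argue that a CC-winning committee must contain at least one candidate approved by a reasonable number of voters; more carefully, since at least one committee member is approved by at least $\nicefrac{\score_{\cc}(A, W_{\cc})}{k}$ voters (as the represented voters are covered by $k$ candidates), and using that CC covers at least as many voters as any single candidate, the numerator is at least $\nicefrac{|N(c_{\max})|}{k}$, giving the ratio $\nicefrac{1}{k}$.

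For seq-CC specifically, I would note that the first candidate chosen by the greedy procedure is precisely a most-approved candidate $c_{\max}$ (since with an empty committee, adding $c$ increases the CC-score by $|N(c)|$), so $c_{\max} \in W_{\text{seq-}\cc}$ and hence $\score_{\av}(A, W_{\text{seq-}\cc}) \geq |N(c_{\max})|$. Combined with $\max_W \score_{\av}(A, W) \leq k \cdot |N(c_{\max})|$, this immediately yields the $\nicefrac{1}{k}$ lower bound for seq-CC. For the (non-sequential) CC rule, the subtlety is that CC is irresolute and the guarantee uses $\min_{W \in \calR(A,k)}$; I would need to confirm that \emph{every} CC-winning committee contains a candidate approved by at least $|N(c_{\max})|$ voters, or handle worst-case tie-breaking directly, which is the main obstacle since an adversarial CC-winner might spread its coverage across many weakly-approved candidates. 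I expect to resolve this by the covering argument above: any committee covering $\score_{\cc}(A, W_{\cc}) \geq |N(c_{\max})|$ voters with $k$ candidates must have a member approved by at least $\nicefrac{|N(c_{\max})|}{k}$ voters.

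For the \emph{upper bound}, I would exhibit a family of profiles showing the guarantee cannot exceed $\nicefrac{1}{k}$, mirroring the construction in \Cref{prop:AvGuaranteeOfCC} but with the roles adapted. The idea is to design a profile where CC is forced (or at least permitted under worst-case tie-breaking) to select a committee whose AV-score is small relative to the AV-optimum. A natural construction places one highly-approved candidate and $k-1$ additional candidates each approved by a small disjoint group of voters, arranged so that CC maximizes coverage by picking the small-group representatives while the AV-optimal committee concentrates on highly-approved candidates. The hard part is ensuring that the CC-optimal committee indeed consists of the low-AV-score candidates while a disjoint high-AV-score committee exists, making the ratio tend to $\nicefrac{1}{k}$.

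Overall, the main obstacle I anticipate is the upper-bound construction for seq-CC: because seq-CC's first pick is always $c_{\max}$, forcing a ratio as low as $\nicefrac{1}{k}$ requires that the remaining $k-1$ greedy picks contribute essentially nothing to the AV-score while the AV-optimum is roughly $k$ times larger. I would design the profile so that after selecting $c_{\max}$, all remaining candidates that improve CC-coverage are approved by only a negligible number of additional voters, while a set of $k$ heavily-approved candidates (each approved by roughly $|N(c_{\max})|$ voters but overlapping in coverage) forms the AV-optimum. Balancing these two requirements simultaneously for both CC and seq-CC is the delicate step, and I expect the construction to resemble the one in \Cref{prop:AvGuaranteeOfCC} with appropriate modifications to the group sizes and approval sets.
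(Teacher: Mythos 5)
Your treatment of seq-CC and your upper-bound sketch are fine: seq-CC's first pick is a most-approved candidate $c_{\max}$, so its final AV-score is at least $|N(c_{\max})| \geq \frac{1}{k}\max_W \score_{\av}(A,W)$, and your hard instance (a block of voters approving $k$ overlapping strong candidates, plus candidates approved by negligible disjoint groups) is essentially the construction the paper uses for both rules. However, your lower-bound argument for (non-sequential) CC has a genuine quantitative gap. You bound the numerator by locating \emph{one} member of the CC winner approved by at least $\nicefrac{\score_{\cc}(A,W_{\cc})}{k} \geq \nicefrac{|N(c_{\max})|}{k}$ voters, so your lower bound on $\score_{\av}(A,W_{\cc})$ is $\nicefrac{|N(c_{\max})|}{k}$. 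Combined with your (correct) denominator bound $\max_W \score_{\av}(A,W) \leq k\,|N(c_{\max})|$, this yields a ratio of only $\nicefrac{1}{k^2}$, not the claimed $\nicefrac{1}{k}$: the covering argument you invoke to handle adversarial tie-breaking discards exactly the factor of $k$ you need.

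The missing observation---which is what the paper's pigeonhole/contradiction argument implicitly relies on---is that for \emph{every} committee $W$ one has $\score_{\av}(A,W) \geq \score_{\cc}(A,W)$, since each voter covered by $W$ contributes at least one approval to the AV-score. This immediately handles irresoluteness as well, because it applies to every CC-winning committee: any such $W_{\cc}$ satisfies
\begin{align*}
\score_{\av}(A,W_{\cc}) \;\geq\; \score_{\cc}(A,W_{\cc}) \;\geq\; |N(c_{\max})| \;\geq\; \tfrac{1}{k}\max_{W \in \mathcal{S}_k(C)} \score_{\av}(A,W)\text{,}
\end{align*}
where the middle inequality holds because any committee containing $c_{\max}$ has CC-score at least $|N(c_{\max})|$ and $W_{\cc}$ is CC-optimal. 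In short, you should credit \emph{all} voters covered by the CC winner to its AV-score, rather than only those approving its single most-approved member; with that replacement your argument closes the gap and coincides with the paper's proof.
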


\begin{proof}
For an approval profile $A$ let $W_{\cc}$ and $W_{\av}$ be committees winning according to CC and AV, respectively. We will first prove that $\score_{\av}(A, W_{\cc}) \geq \frac{1}{k}\cdot \score_{\av}(A, W_{\av})$. If it was not the case, then by the pigeonhole principle, there would exist a candidate $c \in W_{\av}$ such that $\score_{\av}(A, W_{\cc}) < \score_{\av}(A, \{c\})$. However, this means that a committee that consists of $c$ and any $k-1$ candidates have higher CC-score than $W_{\cc}$, a contradiction. Thus, the utilitarian guarantee of CC is at least $\nicefrac{1}{k}$. 
For seq-CC, the same argument by contradiction applies as this candidate $c$ would have been chosen in the first round.

To see that this guarantee cannot be higher than $\nicefrac{1}{k}$ consider the following profile: assume there are $x$ voters ($x$ is a large integer) who approve candidates $c_1, \ldots, c_k$. Further, for each candidate $c_{k+1}, \ldots, c_{2k}$ there is a single voter who approves only her. The CC-winning committee is $\{c_1, c_{k+1} \ldots, c_{2k-1}\}$ with the AV-score of $x + k - 1$. However, the AV-score of committee $\{c_1, \ldots, c_k\}$ is $xk$, and for large enough $x$ the ratio $\frac{x + k -1}{xk}$ can be made arbitrarily close to $\nicefrac{1}{k}$. 
\end{proof}

\Cref{prop:AvGuaranteeOfCC,prop:CCGuaranteeOfAV} give a baseline for our further analysis. In particular, we would expect that ``good'' rules implementing a tradeoff between utilitarian and representation objectives should have utilitarian and representation guarantees better than $\nicefrac{1}{k}$.

We note that the representation guarantee of seq-CC is $1- (1 - \nicefrac 1 k)^k \geq 1 - \nicefrac{1}{e} \approx 0.63$. This follows from the fact that seq-CC is a \mbox{$(1- (1 - \nicefrac 1 k)^k)$}-approximation algorithm for CC~\citep{budgetSocialChoice}. (The result from \citeauthor{budgetSocialChoice} follows from a more general approximation result for submodular set functions by \citealp{submodular}.)

\iffalse
\subsection{Guarantees for Monroe and Greedy Monroe}\
\fi

Let us turn our attention to Monroe's rule, which is often considered a proportional rule. For example, Monroe's rule satisfies proportional justified representation \citep{pjr17}. Hence, one could expect that this rule offers a good compromise between AV and CC. Perhaps surprisingly, this is not the case: Monroe does not offer a better utilitarian guarantee than CC.
The same bounds hold for the Greedy Monroe rule.

\newcommand{\thmmonroecombined}{The utilitarian guarantee of Monroe and Greedy Monroe is $\nicefrac{1}{k}$, their representation guarantee is between $\frac{1}{2}$ and $\frac{1}{2} + \frac{1}{k-1}$.}

\begin{theorem}
\thmmonroecombined\label{thm:monroe-combined}
\end{theorem}

Let us now move to ABC rules offering asymptotically better guarantees than Monroe. As we will see, the examination of such rules requires a more complex combinatorial analysis. We start with PAV:

\newcommand{\thmpav}{The utilitarian guarantee of PAV is between $\frac{1}{2 + \sqrt{k}}$ and $\frac{2}{\sqrt{k}}$;
its representation guarantee is between $\frac{1}{2}$ and $\frac{1}{2} + \frac{1}{4k-2}$.}

\newcommand{\thmslav}{The utilitarian guarantee of SLAV is between $\frac{1}{2 + \sqrt{2k}}$ and $\frac{3}{2\sqrt{k}}$;
its representation guarantee is between $\frac{3}{5}$ and $\frac{2}{3}+\frac{1}{9k-6}$.}

\begin{theorem}\label{thm:guarantees_for_pav}
\thmpav
\end{theorem}
\begin{proof}[Proof of the utilitarian guarantee]
To give a flavor of the proof techniques used in this paper, we show that the utilitarian guarantee of PAV is at least equal to $\frac{1}{2 + \sqrt{k}}$. Consider an approval profile $A$ and a PAV-winning committee $W_{\pav}$; let $\npav = |N(W_{\pav})|$ denote the number of voters who approve some member of $W_{\pav}$. For each $i \in N$ we set $w_i = |W_{\pav} \cap A(i)|$. Let $W_{\av}$ be a committee with the highest AV-score. Without loss of generality, we can assume that $W_{\av} \neq W_{\pav}$. Now, consider a candidate $c \in W_{\av} \setminus W_{\pav}$ with the highest AV-score, and let $n_c = |N(c)|$ denote the number of voters who approve $c$. If we replace a candidate $c' \in W_{\pav}$ with $c$, the PAV-score of $W_{\pav}$ will change by:
\begin{align*}
\Delta(c, c')\ &= \sum_{\substack{i\in N \text{ s.t. } c \in A(i)\\\text{and } c'\notin A(i)}} \frac{1}{w_i+1} - \sum_{\substack{i\in N \text{ s.t. } c' \in A(i)\\\text{and } c\notin A(i)}} \frac{1}{w_i} \\
&= \sum_{i\colon c \in A(i)} \frac{1}{w_i+1} - \sum_{i\colon c' \in A(i)} \frac{1}{w_i} + \sum_{i\colon \{c,c'\} \subseteq A(i)} \frac{1}{w_i}-\frac{1}{w_i+1} 
\\&\geq \sum_{i \in N(c)} \frac{1}{w_i+1} - \sum_{i \in N(c')} \frac{1}{w_i} \text{.}
\end{align*}
Let us now compute the sum
\begin{align*}
\begin{split}
\sum_{c' \in W_{\pav}}\Delta(c, c') &\geq \sum_{c' \in W_{\pav}}\sum_{i \in N(c)} \frac{1}{w_i+1} - \sum_{c' \in W_{\pav}}\sum_{i \in N(c')} \frac{1}{w_i} \\
                             &= k \sum_{i \in N(c)} \frac{1}{w_i+1} - \sum_{i \in N} \underbrace{\sum_{c' \in W_{\pav} \cap A(i)} \frac{1}{w_i}}_{(*)} \\
                             &= k \sum_{i\in N(c)} \frac{1}{w_i+1} - \npav \text{,}
\end{split}
\iffalse \label{eq:sum_delta_pav}\fi
\end{align*}
noting that the expression $(*)$ is 1 if $W_{\pav} \cap A(i)\neq \emptyset$ and 0 otherwise.
We know that for each $c' \in W$ we have $\Delta(c, c') \leq 0$, and thus 
$\sum_{c' \in W_{\pav}}\Delta(c, c')\leq 0$.
Consequently, 
$k \sum_{i \in N(c)} \frac{1}{w_i+1} - \npav \leq 0$ and 
\begin{align}
\sum_{i \in N(c)} \frac{1}{w_i+1} \leq \frac{\npav}{k}.\label{eq:sumw-bound}
\end{align}
Now, recall that the inequality between the harmonic and arithmetic mean says that for all positive values $a_1, \ldots, a_z$ it holds that:
\begin{align*}
\frac{1}{z}\sum_{i = 1}^z a_i \geq \frac{z}{\sum_{i=1}^z \frac{1}{a_i}} \text{.} 
\end{align*}
We reformulate the inequality between the harmonic and arithmetic mean as:
\begin{align}
\sum_{i=1}^z \frac{1}{a_i} \geq \frac{z^2}{\sum_{i = 1}^z a_i} \text{.}\label{eq:inequ-harm-ari} 
\end{align}
By setting $z = |N(c)|$ and $a_i = w_i+1$ we get from inequalities~\eqref{eq:sumw-bound}
and~\eqref{eq:inequ-harm-ari}:
\begin{align*}
\frac{\npav}{k} \geq \sum_{i \in N(c)} \frac{1}{w_i+1} \geq \frac{n_c^2}{\sum_{i \in N(c)} (w_i+1)}
\iffalse = \frac{n_c^2}{\sum_{i\in N(c)} w_i + n_c}
\fi
.
\end{align*}
This can be transformed to:
\begin{align*}
kn_c \leq \frac{\npav \big(n_c + \sum_{i \in N(c)} w_i \big)}{n_c} = \frac{\npav \sum_{i \in N(c)} w_i}{n_c} + \npav\text{.}
\end{align*}
Now, observe that $\score_{\av}(A,W_{\av})\leq \sum_{i \in N} w_i + k n_c$ due to the choice of $c$ and $\score_{\av}(A,W_{\pav})=\sum_{i \in N} w_i$ by definition of $w_i$.
Let us consider two cases. If $\npav \leq n_c \sqrt{k}$, then we observe that:
\begin{align*}
\frac{\score_{\av}(A,W_{\av})}{\score_{\av}(A,W_{\pav})} &\leq \frac{\sum_{i \in N} w_i + k n_c}{\sum_{i \in N} w_i} = 1 + \frac{k n_c}{\sum_{i \in N} w_i} 
\\&\leq 1 + \frac{\frac{\npav \sum_{i \in N(c)} w_i}{n_c} + \npav }{\sum_{i \in N} w_i}\\
&\leq 1 + \frac{\npav }{n_c}\underbrace{\frac{\sum_{i \in N(c)} w_i}{\sum_{i \in N} w_i}}_{\leq 1} + \underbrace{\frac{\npav}{\sum_{i \in N} w_i}}_{\leq 1}
\leq 2 + \frac{\npav}{n_c} \leq \sqrt{k} + 2.
\end{align*}
On the other hand, if $\npav \geq n_c \sqrt{k}$, then:
\begin{align*}
\frac{\score_{\av}(A,W_{\av})}{\score_{\av}(A,W_{\pav})} &\leq \frac{\sum_{i \in N} w_i + k n_c}{\sum_{i \in N} w_i} = 1 + \frac{k n_c}{\sum_{i \in N} w_i} \\
&\leq 1 + \frac{k n_c}{\npav} \leq 1 + \frac{k n_c}{n_c\sqrt{k}} \leq 1 + \sqrt{k}.
\end{align*}
In either case we have that $\frac{\score_{\av}(A,W_{\pav})}{\score_{\av}(A,W_{\av})} \geq \frac{1}{2 + \sqrt{k}}$.
This yields the required lower bound.

The fact that the utilitarian guarantee of PAV is at most equal to $\frac{2}{\lfloor \sqrt{k} \rfloor} - \frac{1}{k}$ will follow from a more general result
in \Cref{sec:propcompromise} (\Cref{thm:proportional_guarantees}).
\end{proof}

While the utilitarian guarantees of SLAV and PAV are of the same order, SLAV gives a considerably better representation guarantee.

\begin{theorem}\label{thm:guarantees_for_slav}
\thmslav
\end{theorem}

However, we will show in \Cref{sec:average_guarantees} through numerical experiments that PAV slightly outperforms SLAV with respect to the utilitarian criterion.

For sequential PAV we can prove a utilitarian guarantee qualitatively similar to the one for PAV.
Concerning its representation guarantee, however, the gap between the lower and upper bounds is large; finding a more accurate estimate remains an interesting open question.

\newcommand{\thmseqpavguaranteecombined}{The utilitarian guarantee of sequential PAV is between $\frac{1}{2\sqrt{k}}$ and $\frac{2}{\lfloor \sqrt{k} \rfloor} - \frac{1}{k}$; its  representation guarantee is between $\frac{1}{\log(k) + 2}$ and $\frac{1}{2} + \frac{1}{4k-2}$.}

\begin{theorem}
\thmseqpavguaranteecombined\label{thm:seq_pav_guarantee-combined} 
\end{theorem}

The following theorem states guarantees for the $p$-Geometric rule.
Let $\W(\cdot)$ denote the Lambert~$\W$ function, i.e., the function that is defined so as to satisfy $z = \W(z) e^{\W(z)}$ for each $z \in \reals$.  
The Lambert $\W$ function increases asymptotically slower than $\log$.

\begin{figure}
\begin{center}
\subcaptionbox{utilitarian guarantee}{
\begin{tikzpicture}[y=3.6cm, x=1.1cm,font=\sffamily, scale=0.8]
\pgfdeclarelayer{bg}
\pgfsetlayers{bg,main}
    \draw (0.8,0) -- coordinate (x axis mid) (6.2,0) node[anchor = north west] {};
        \draw (0.8,0) -- coordinate (y axis mid) (0.8,1.05);
        \foreach \x in {1,...,6}
             \draw (\x,1pt) -- (\x,-3pt)
            node[anchor=north] {\x};
        \foreach \y in {0,.5,1}
             \draw (0.9,\y) -- (0.7,\y) 
                 node[anchor=east] {\y}; 
    \draw (6.3,0.1) node[anchor = north west] {$p$};
    \draw[ultra thick] plot[mark=none] 
        file {avlb.data};
    \draw[ultra thick] plot[mark=none] 
        file {avub.data};        

\end{tikzpicture}
}\quad
\subcaptionbox{representation guarantee}{
\begin{tikzpicture}[y=3.6cm, x=1.1cm,font=\sffamily, scale=0.8]
\pgfdeclarelayer{bg}
\pgfsetlayers{bg,main}
    \draw (0.8,0) -- coordinate (x axis mid) (6.2,0) node[anchor = north west] {};
        \draw (0.8,0) -- coordinate (y axis mid) (0.8,1.05);
        \foreach \x in {1,...,6}
             \draw (\x,1pt) -- (\x,-3pt)
            node[anchor=north west] {\x};
        \foreach \y in {0,.5,1}
             \draw (0.9,\y) -- (0.7,\y) 
                 node[anchor=east] {\y}; 
    \draw (6.3,0.1) node[anchor = north west] {$p$};
    \draw[ultra thick] plot[mark=none] 
        file {cclb.data};
    \draw[ultra thick] plot[mark=none]
        file {ccub.data}; 

\end{tikzpicture}
}
\end{center}
\caption{Guarantees for the $p$-Geometric rule for $k=25$ and varying $p$. In both subfigures, the upper and the lower line depict the upper and lower bound from \Cref{thm:pgeom-combined}.}
\label{fig:pgeom}
\end{figure}

\begin{figure}
\begin{center}
\subcaptionbox{utilitarian guarantee}{
\begin{tikzpicture}[y=3.6cm, x=0.07cm,font=\sffamily, scale=0.8]
\pgfdeclarelayer{bg}
\pgfsetlayers{bg,main}
    \draw (0,0) -- coordinate (x axis mid) (105,0) node[anchor = north west] {};
        \draw (1,0) -- coordinate (y axis mid) (1,1.05);
        \foreach \x in {1,25,50,75,100}
             \draw (\x,1pt) -- (\x,-3pt)
            node[anchor=north] {\x};
        \foreach \y in {0,.5,1}
             \draw (3,\y) -- (-1,\y)
                 node[anchor=east] {\y}; 
    \draw (105,0.1) node[anchor = north west] {$k$};
    \draw[ultra thick] plot[mark=none] 
        file {k-avlb.data};
    \draw[ultra thick] plot[mark=none] 
        file {k-avub.data};        

\end{tikzpicture}
}\quad
\subcaptionbox{representation guarantee}{
\begin{tikzpicture}[y=3.6cm, x=0.07cm,font=\sffamily, scale=0.8]
\pgfdeclarelayer{bg}
\pgfsetlayers{bg,main}
    \draw (0,0) -- coordinate (x axis mid) (105,0) node[anchor = north west] {};
        \draw (1,0) -- coordinate (y axis mid) (1,1.05);
        \foreach \x in {1,25,50,75,100}
             \draw (\x,1pt) -- (\x,-3pt)
            node[anchor=north] {\x};
        \foreach \y in {0,.5,1}
             \draw (3,\y) -- (-1,\y)
                 node[anchor=east] {\y};
    \draw (105,0.1) node[anchor = north west] {$k$};
    \draw[ultra thick] plot[mark=none] 
        file {k-cclb.data};
    \draw[ultra thick] plot[mark=none]
        file {k-ccub.data}; 

\end{tikzpicture}
}
\end{center}
\caption{Guarantees for the $2$-Geometric rule for varying $k$. In both subfigures, the upper and the lower line depict the upper and lower bound from \Cref{thm:pgeom-combined}.}
\label{fig:pgeom-k}
\end{figure}

\newcommand{\thmpgeom}{The utilitarian guarantee of the $p$-Geometric rule is between
\begin{align*}
\frac{\W(k \log(p))} {k \log(p) + \W(k \log(p))} \qquad \text{and} \qquad  \frac{2\W(k \log(p))}{k \log(p)}+ \frac{1}{k},
\end{align*}
its representation guarantee is between 
\begin{align*}\frac{p-1}{p} \qquad \text{and} \qquad \frac{p}{p+\frac{k}{k+2}}.
\end{align*}
}

\begin{theorem}\label{thm:pgeom-combined}\thmpgeom
\end{theorem}

The guarantees of \Cref{thm:pgeom-combined} are visualized in \Cref{fig:pgeom}. We can see that the $p$-Geometric rules, for $p \in [1, \infty)$, form a spectrum connecting AV and CC (with $p \to 1$ we approach AV and with $p \to \infty$ we approach CC): by adjusting the parameter $p$ one can control the tradeoff between diversity and individual excellence. In addition, \Cref{fig:pgeom-k} shows the dependence of the bounds for the 2-Geometric rule on the parameter~$k$; we see that the bounds become more meaningful for growing~$k$.

Finally, we consider seq-Phragm\'{e}n, also a rule aimed at achieving proportional representation. It achieves guarantees similar to PAV.

\newcommand{\thmphrag}{The utilitarian guarantee of seq-Phragm\'{e}n is between $\frac{1}{5\sqrt{k} + 1}$ and $\frac{2}{\lfloor \sqrt{k} \rfloor} - \frac{1}{k}$; its representation guarantee is between $\frac{1}{2}$ and $\frac{1}{2} + \frac{1}{4k-2}$.}

\begin{theorem}\label{thm:phragmen_bounds-combined}
\thmphrag
\end{theorem}

The next result allows for an interesting comparison between seq-Phragm\'{e}n and leximin-Phragm\'{e}n: in terms of utilitarian efficiency leximin-Phragm\'{e}n is significantly worse than its sequential counterpart, whereas the bounds on their representation guarantees are the same.
We see that---from the perspective of our two guarantees---the additional computational complexity of leximin-Phragm\'{e}n does not strengthen its performance (in contrast to PAV and sequential PAV).

\newcommand{\thmoptphrag}{The utilitarian guarantee of leximin-Phragm\'{e}n is $\frac{1}{k}$, and its representation guarantee is between $\frac{1}{2}$ and $\frac{1}{2} + \frac{1}{4k-2}$.}

\begin{theorem}\label{thm:phragmen_opt_bounds-combined}
\thmoptphrag
\end{theorem}

Finally, we show that Minimax Approval Voting is particularly unfavorable with respect to our two criteria.

\newcommand{\thmminimax}{The utilitarian and the representation guarantee of Minimax Approval Voting is $0$.}

\begin{theorem}\label{thm:mnimax_bounds-combined}
\thmminimax
\end{theorem} 

\section{Proportionality as a Compromise?}\label{sec:propcompromise}

In this section we investigate how proportionality relates to diversity and individual excellence, as formalized by the representation and the utilitarian objectives, respectively. In an initial stage of this study we conjectured that proportionality can be characterized as a certain compromise between diversity and individual excellence. However, as we will argue below, proportionality should be rather viewed as a third, independent objective. Indeed, we will construct an ABC rule that is strictly better than any (linearly) proportional rule with regard to both its utilitarian and representation guarantee.  

For each $\alpha \in [0, 1]$ we define $\alpha$-CC-AV as a linear combination of CC and AV. For an approval-based profile $A$, $\alpha$-CC-AV first computes a size-$\lceil \alpha k\rceil$ subset $W_1 \subseteq C$ that maximizes $\score_{\cc}(A, W_1)$ and then returns a set $W$ of size $k$ that (i) is a superset of $W_1$ and (ii) maximizes $\score_{\av}(A, W)$.

\begin{proposition}
The representation guarantee of $\alpha$-CC-AV is at least $\alpha$; its utilitarian guarantee is at least $1 - \alpha - \nicefrac{1}{k}$.
\end{proposition}

\begin{proof}
Consider an approval-based profile $A$, and let $W$ be a committee returned by $\alpha$-CC-AV for $A$. Let $W_{\cc}$ be a committee returned by CC for the same profile. 
Note that $W_1$ achieves at least a $\frac{|W_1|}{k}$-fraction of the CC-score of $W_{\cc}$. Hence
\begin{align*}
\score_{\cc}(A, W_1) \geq \frac{\lceil \alpha k\rceil}{k}\cdot \score_{\cc}(A, W_{\cc}) \geq \alpha\score_{\cc}(A, W_{\cc}) \text{.}
\end{align*}
Thus, also $\score_{\cc}(A, W) \geq \alpha\score_{\cc}(A, W_{\cc})$.
By the same reasoning, we get an analogous result for the utilitarian guarantee.
\end{proof}

Let us now examine what utilitarian and representation guarantees a proportional rule can achieve. We consider a very weak proportionality requirement, called weak proportionality, as studied in the apportionment setting (cf.~\citealp{BaYo82a}).
\iffalse
 in the context of apportionment methods (which are special cases of approval-based multiwinner rules) and is strictly weaker than proportionality axioms typically used in the  context of approval-based multiwinner rules (such as extended and proportional justified representation~\citep{pjr17,justifiedRepresenattion}).
 \fi
Weak proportionality applies only to \emph{party-list profiles}: approval profiles $A$ in which for all voters $i,j\in N$ it holds that either $A(i)=A(j)$ or $A(i)\cap A(j)=\emptyset$; identical voters belong to ``same party''. 

\begin{definition}
We call a profile $A$ a \emph{party-list profile} if for each pair of voters $i, j \in N$ it holds that either $A(i) \cap A(j) = \emptyset$ or that $A(i) = A(j)$. 
We say that $N'\subset N$ is a \emph{party} if it is a maximal subset of $N$ with $A(i) = A(j)$ for $i,j\in N'$.
\end{definition}  

Now, for such profiles, if there exist one or more committees that give each party a number of representatives that is exactly proportional to the party's support, then a proportional rule has to select such a committee.

\begin{definition}
A party-list profile $A$ is $k$-integral if for every party $N'$ it holds that $\frac{k\cdot |N'|}{|N|}$ is an integer and that party $N'$ has at least that many candidates (that is, $|A(i)|\geq \frac{k\cdot |N'|}{|N|}$ for $i\in N'$).
An ABC rule $\calR$ satisfies \emph{weak proportionality} if for every $k \in \naturals$, every $k$-integral party-list profile $A$, and party $N'$ it holds that every winning committee from $\calR(A, k)$ contains exactly $\frac{k\cdot |N'|}{|N|}$ candidates that are approved by $N'$.
\end{definition}  

Weak proportionality is
strictly weaker than other proportionality axioms typically used for ABC rules (such as extended and proportional justified representation~\citep{pjr17,justifiedRepresenattion}), and even weaker than lower and upper quota from the apportionment setting~\citep{BaYo82a}.
Further, \citet{BaYo82a} view this axiom as a minimal requirement for an apportionment method. Thus, any ABC rule that generalizes one of the established apportionment methods also satisfies weak proportionality; this holds in particular for PAV, SLAV, their sequential and reverse-sequential versions, Monroe, and \phragmen's rules \citep{BLS17a}.

\newcommand{\thmproportionalguarantees}{The utilitarian guarantee of a rule satisfying weak proportionality  is at most $\frac{2}{\lfloor \sqrt{k} \rfloor} - \frac{1}{k}$, its representation guarantee is at most $\frac{3}{4}+\frac{3}{8k - 4}$.
}

\begin{theorem}\label{thm:proportional_guarantees}
\thmproportionalguarantees
\end{theorem}

\begin{proof}
Let $\calR$ be a rule that satisfies weak proportionality. 
To see the upper bound on the utilitarian guarantee, let us fix $k$ and consider the following $k$-integral party-list profile $A$ with $n = k$ voters.
The first $\lfloor \sqrt{k} \rfloor$ voters form a party and approve $k$ candidates denoted as $x_1, \ldots, x_k$. All remaining voters $i$, $i > \lfloor \sqrt{k} \rfloor$, form singleton parties and approve a single candidate $y_i$.

Let $W$ and $W_{\av}$ denote the committees returned by $\calR$ and by AV, respectively. Weak proportionality ensures that $y_i \in W$ for each $i > \lfloor \sqrt{k} \rfloor$ and $|W\cap \{x_1, \ldots, x_k\}|=\lfloor \sqrt{k} \rfloor$. Thus, 
\begin{align*}
\score_{\av}(A, W) =  \lfloor \sqrt{k} \rfloor \cdot  \lfloor \sqrt{k} \rfloor + \left(k - \lfloor \sqrt{k} \rfloor\right) \cdot 1 \leq 2k - \lfloor \sqrt{k} \rfloor  \text{.}
\end{align*}
On the other hand, one can observe that $W_{\av} = \{x_1, \ldots, x_k\}$, and so $\score_{\av}(A, W_{\av}) = \lfloor \sqrt{k} \rfloor  \cdot k$. As a result we have:
\begin{align*}
\frac{\score_{\av}(A, W)}{\score_{\av}(A, W_{\av})} \leq \frac{2k - \lfloor \sqrt{k} \rfloor }{\lfloor \sqrt{k} \rfloor \cdot k} = \frac{2}{\lfloor \sqrt{k} \rfloor} - \frac{1}{k} \text{.}
\end{align*}

To prove the upper bound on the representation guarantee, we construct a $k$-integral party-list profile $A$ with $n = 2k$ voters. 
Let us first consider the case that $k$ is even.
The first $k$ voters approve candidates $X = \{x_1, \ldots, x_k\}$, i.e., they form a party. The other $k$ voters are divided into $k$ singleton parties, each approving a different candidate from the set $Y = \{y_1, \ldots, y_{k}\}$. Weak proportionality ensures that $\nicefrac{k}{2}$ candidates need to be chosen from $X$. Thus, the CC-score of a committee selected by $\calR$ is at most equal to $k + \frac{k}{2}$. By selecting one candidate from $X$ and $k-1$ candidates from $Y$ we get a CC-score of $2k - 1$. Thus, the representation guarantee is at most equal to:
\begin{align*}
\frac{k + \frac{k}{2}}{2k - 1} = \frac{3k}{4k - 2}=\frac{3}{4}+\frac{3}{8k - 4} \text{.} 
\end{align*}  

If $k$ is odd, the first $k+1$ voters approve candidates $X = \{x_1, \ldots, x_k\}$, the other $k-1$ voters form singleton parties.
Weak proportionality ensures that $\left\lceil\nicefrac{k}{2}\right\rceil$ candidates need to be chosen from $X$.
Thus, the CC-score of a committee selected by $\calR$ is at most equal to $(k+1) + \left\lfloor\nicefrac{k}{2}\right\rfloor$.
We obtain an upper bound on the representation guarantee of:
\begin{align*}
\frac{(k+1) + \left\lfloor\nicefrac{k}{2}\right\rfloor}{2k - 1} = \frac{(2k+2)+(k-1)}{4k - 2}=\frac{3}{4}-\frac{1}{4k}\leq \frac{3}{4}+\frac{3}{8k - 4} \text{.} 
\end{align*}  
\end{proof}

These bounds yield that $\alpha$-CC-AV for $\alpha=0.76$ achieves a better utilitarian and representation guarantee for all $k\geq 81$ than any rule satisfying weak proportionality. In particular, the utilitarian guarantee of $\alpha$-CC-AV is superior: it guarantees a constant fraction of the optimal AV score.
We conclude that proportional rules can achieve a desirable compromise between diversity and individual excellence (e.g., PAV), but this compromise is not optimal (as we have just seen) and not all proportional rules achieve good utilitarian guarantees (Monroe is no better than CC).

\section{Average Performance: An Experimental Evaluation}\label{sec:average_guarantees}

To complement the theoretical analysis of \Cref{sec:average_guarantees}, we conducted numerical experiments that aim at assessing (average-case) \emph{utilitarian ratios} and \emph{representation ratios} achieved by several voting rules.
These two ratios are per-instance analogues of utilitarian and representation guarantees and are defined as follows:
Given an ABC rule~$\calR$ and a profile $A$, the \emph{utilitarian ratio} and the \emph{representation ratio} are defined as:
\begin{align}
\frac{\min_{W\in \calR(A, k)}\score_{\av}(A,W)}{\max_{W \in \mathcal{S}_k(C)} \score_{\av}(A, W)} \qquad  \text{and} \qquad  \frac{\min_{W\in \calR(A, k)}\score_{\cc}(A,W)}{\max_{W \in \mathcal{S}_k(C)} \score_{\cc}(A, W)} \text{,}\label{eq:ratios}
\end{align}
i.e., we compute the ratio between the optimal committee (according to AV and CC, respectively) and compare it with the committees returned by $\calR$.
In these experiments, we have calculated the utilitarian and representation ratios for real-world and randomly generated profiles and compared them for different ABC rules.\footnote{The Python code for the experiments is available online \citep{martin_lackner_2020_3778972}.}
We have used four datasets: profiles obtained from \texttt{preflib.org}~\citep{conf/aldt/MatteiW13} and three sets of profiles generated via probability distributions (uniformly distributed, Mallows model, and P\'olya's urn model; see details below).

\subsection*{Datasets}
The preflib dataset is based on preferences obtained from \texttt{preflib.org}. 
Since the preflib database contains only very few approval-based datasets, we extracted approval profiles from ranked ballots as follows: for each ranked profile and $i\in\{1,\dots,20\}$, we generated an approval profile assuming that voters approve all candidates that are ranked in the top $i$ positions.
Among these profiles, we restricted our attention to profiles where both the utilitarian ratio of CC and the representation ratio of AV was at most $0.9$ (assuming committee sizes $k\in\{4,\dots,20\}$).
This step excluded profiles where an (almost) perfect compromise between the two criteria exists.
With this method, we obtained a total number of 364 preflib-based instances.

All three random datasets consist of 10,000 profiles with 100 candidates and 50 voters, each,
and the corresponding experiments use a committee size of $k=20$.
Furthermore, as for the preflib instances, we required that in the generated profiles CC has a utilitarian ratio of at most $0.9$ and AV a representation ratio of at most $0.9$.

For the uniform dataset, voters' approval sets are of size $2$--$4$ (chosen uniformly at random); the approval sets of a given size are also chosen uniformly at random.

The Mallows dataset is based on Mallows model \citep{mallows}, a probability distribution for rankings. Given a reference ranking $\sigma$, the probability of a ranking $r$ is
\[\mathbb P(r)=\frac{1}{Z}\phi^{d(r,\sigma)},\]
where $d$ denotes the Kendall-tau distance, $\phi$ is a parameter in $(0,1]$,
and $Z$ is chosen so that the probability mass is $1$.
This model generates rankings that are likely to be similar to the reference ranking $\sigma$; the degree of similarity depends on parameter $\phi$.
For each instance, we chose $\phi$ uniformly at random from the interval $(0,1]$.
Each profile consisting of rankings was transformed to an approval profile by selecting the top-$i$ positions, $i$ being chosen uniformly at random from $\{2,3,4\}$.

Finally, the P\'olya urn model (also refereed to as the P\'{o}lya-Eggenberger urn model) \citep{urnmodels,berg1985urn} was used to generate a further dataset. For each profile, first the size of approval sets was chosen uniformly at random from $\{2,3,4\}$.
The P\'{o}lya urn model is parameterized by a non-negative real $\alpha$, which we chose for each instance uniformly at random from $[0,1]$.
At first, we consider an urn containing all approval sets of the chosen size; if $d$ is the size of approval sets the urn contains $100\choose d$ sets.
The profile is then created by subsequently drawing approval sets from the urn uniformly at random, and then the urn is modified in the following manner.
The drawn approval set is returned to the urn and $\alpha\cdot {100\choose d}$ additional copies of this set is added to the urn.
This procedure is repeated as often as required. The corresponding profiles are likely to contain voters with identical approval sets.

\subsection*{Considered Rules}

We considered the following ABC rules: AV, CC, seq-CC, PAV, seq-PAV, rev-seq-PAV, SLAV, seq-\phragmen, Monroe's rule, MAV, as well as the 1.5-, 2-, and 5-Geometric rule.
We used resolute versions of these rules, i.e., we computed only one winning committee subject to an arbitrary tie-breaking; this simplification allowed us to compute larger committees in reasonable time.
The utilitarian and representation ratio (cf.\ Equation~\ref{eq:ratios}) for a rule $\calR$ and profile $A$ thus simplify to
\begin{align*}
\frac{\score_{\av}(A,\calR(A, k))}{\score_{\av}(A, W_\av)} \qquad  \text{and} \qquad  \frac{\score_{\cc}(A,\calR(A, k))}{\score_{\cc}(A, W_\cc)} \text{,}
\end{align*}
where $W_\av$ and $W_\cc$ are committees with maximum AV and CC score, respectively.

We could not include \phragmen's leximin rule in our experiments as it is too time-consuming to compute for reasonably sized instances. This is mainly due to the non-trivial (lexicographic) tie-breaking required in its computation~\citep{aaai/BrillFJL17-phragmen}, which makes it difficult to solve for ILP solvers. Further research is required to tackle this algorithmic problem.
However, we performed rudimentary experiments for smaller instances and expect the performance of \phragmen's leximin rule to be similar to Monroe (and thus inferior to seq-\phragmen, see below).

\subsection*{Results}
Our results are displayed as boxplots in \Cref{fig:preflib} for the preflib dataset and in \Cref{fig:uniform} for the uniform dataset.
Since the results for the Mallows and the urn dataset are largely similar, we have moved the corresponding plots to the appendix, Section~\ref{app:exp}.
In these plots, the top and bottom of boxes represent the first and third quantiles, the middle red bar shows the median.
The dashed intervals (whiskers) show the range of all values, i.e., the minimum and maximum utilitarian or representation ratio.
In addition, we use our results to rank the rules according to their mean utilitarian and representation ratios (\Cref{tab:util,tab:egal} below, \Cref{tab:util2,tab:egal2} in the appendix).
The differences between the means (and thus the ranking) are largely statistically significant according to a paired t-test with significance level $p=0.01)$.
Those differences that are not statistically significant are marked with brackets in the tables.

\begin{figure}
\includegraphics[width=\textwidth]%
{scores-preflib}
\caption{Results for the preflib dataset (upper boxplot shows utilitarian ratios, the lower representation ratios).}
\label{fig:preflib}
\end{figure}

\begin{figure}
\includegraphics[width=\textwidth]{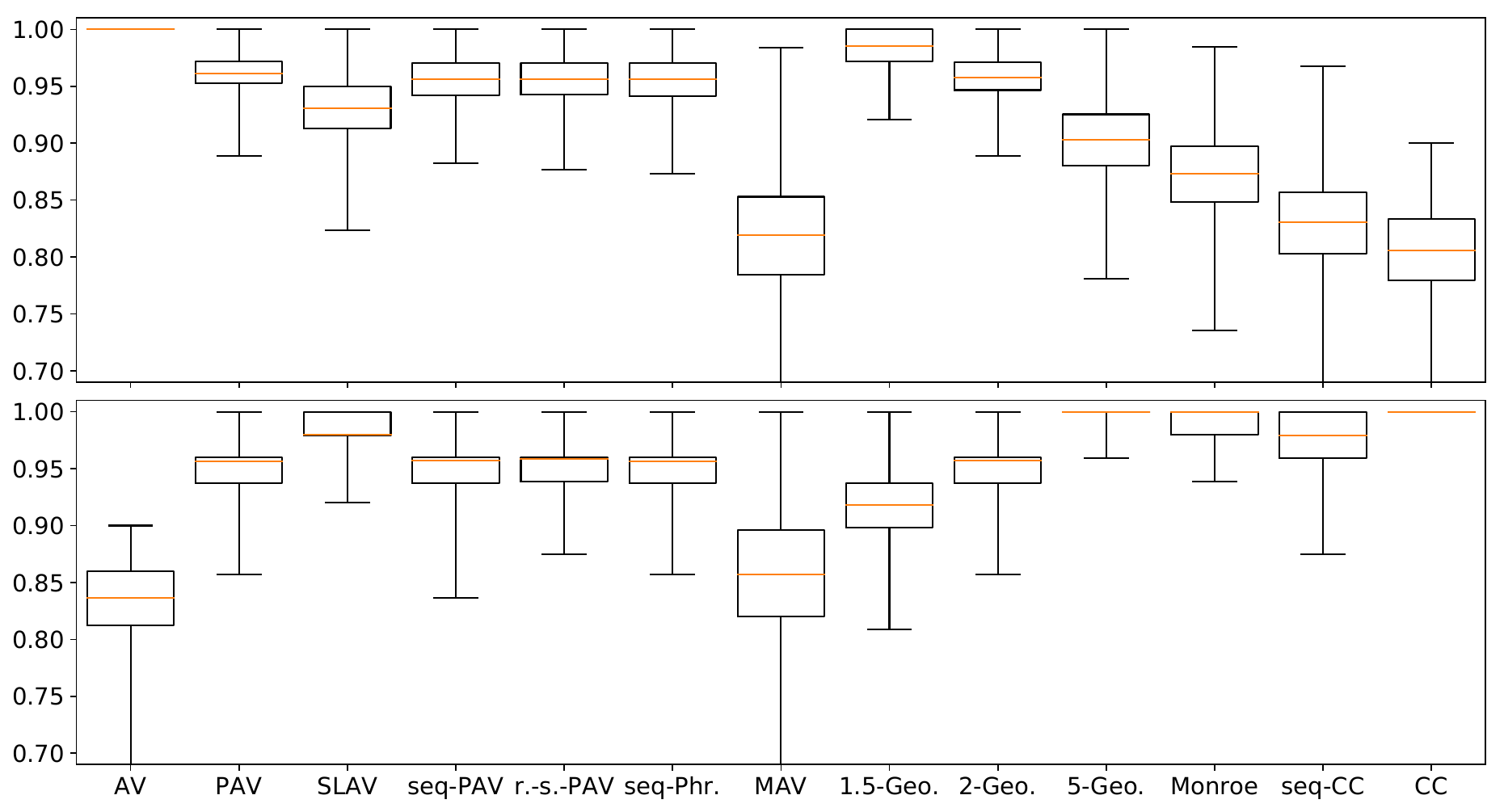}%
\caption{Results for the uniform dataset (upper boxplot shows utilitarian ratios, the lower representation ratios).}
\label{fig:uniform}
\end{figure}

\begin{table}
\begin{center}
\subcaptionbox{preflib dataset}{
\begin{tabular}{clrc}
    &                & mean  & \\
    \toprule
 1. &AV              &1.000& \\
 2. &$1.5$-Geometric &0.982& \\
 3. &seq-\phragmen   &0.973& \\
 4. &PAV             &0.969& \rdelim{]}{2}{1mm}[]\\
 5. &seq-PAV         &0.967& \\
 6. &rev-seq-PAV     &0.963& \rdelim{]}{2}{6mm}[]\\
 7. &$2$-Geometric   &0.961& \\
 8. &SLAV            &0.945& \\
 9. &$5$-Geometric   &0.910& \\
10. &Monroe          &0.861& \\
11. &seq-CC          &0.788& \\
12. &CC              &0.736& \\
13. &MAV             &0.607& \\
\end{tabular}}
\qquad\qquad
\subcaptionbox{uniform dataset}{
\begin{tabular}{clrc}
    &                & mean  & \\
    \toprule
 1. &AV              & 1.000 & \\
 2. &$1.5$-Geometric & 0.984 & \\
 3. &PAV             & 0.962 & \\
 4. &$2$-Geometric   & 0.960 & \\
 5. &seq-PAV         & 0.958 & \rdelim{]}{2}{1mm}[]\\
 6. &rev-seq-PAV     & 0.958 & \\
 7. &seq-\phragmen   & 0.957 & \\
 8. &SLAV            & 0.931 & \\
 9. &$5$-Geometric   & 0.902 & \\
10. &Monroe          & 0.872 & \\
11. &seq-CC          & 0.830 & \\
12. &MAV             & 0.817 & \\
13. &CC              & 0.806 & \\
\end{tabular}}%
\end{center}%
\caption{Mean utilitarian ratios for the preflib and uniform dataset. The differences between pairs of rules are statistically significant (paired t-test, $p=0.01$), unless a pair is marked with a bracket.}\
\label{tab:util}
\end{table}

\begin{table}
\begin{center}
\subcaptionbox{preflib dataset}{
\begin{tabular}{clrc}
    &                & mean  & \\
    \toprule
 1. &CC             &1.000& \\
 2. &$5$-Geometric  &0.997& \\
 3. &seq-CC         &0.991& \rdelim{]}{2}{1mm}[]\\
 4. &Monroe         &0.989& \rdelim{]}{2}{6mm}[]\\
 5. &SLAV           &0.986& \\
 6. &rev-seq-PAV    &0.970& \rdelim{]}{2}{6mm}[]\\
 7. &$2$-Geometric  &0.966& \rdelim{]}{2}{1mm}[]\\
 8. &seq-PAV        &0.963& \rdelim{]}{2}{6mm}[]\\
 9. &PAV            &0.962& \\
10. &seq-\phragmen  &0.951& \\
11. &$1.5$-Geometric&0.938& \\
12. &AV             &0.862& \\
13. &MAV            &0.725& \\
\end{tabular}}
\qquad\qquad
\subcaptionbox{uniform dataset}{
\begin{tabular}{clrc}
    &                & mean  & \\
    \toprule
 1. &CC             & 1.000 & \\
 2. &$5$-Geometric  & 0.998 & \\
 3. &Monroe         & 0.993 & \\
 4. &SLAV           & 0.983 & \\
 5. &seq-CC         & 0.975 & \\
 6. &rev-seq-PAV    & 0.953 & \\
 7. &seq-PAV        & 0.951 & \\
 8. &$2$-Geometric  & 0.950 & \\
 9. &PAV            & 0.949 & \\
10. &seq-\phragmen  & 0.948 & \\
11. &$1.5$-Geometric& 0.917 & \\
12. &MAV            & 0.857 & \\
13. &AV             & 0.836 & \\
\end{tabular}}%
\end{center}%
\caption{Mean representation ratios for the preflib and uniform dataset. The differences between pairs of rules are statistically significant (paired t-test, $p=0.01$), unless a pair is marked with a bracket.}
\label{tab:egal}
\end{table}	

The results for all four datasets are largely similar.
The main difference between datasets is a varying range of ratio values (per rule), while
the relative comparison between rules remains stable.
Also \Cref{tab:util} shows largely the same ranking for both datasets; the same holds for \Cref{tab:egal}.
Further note that the rankings differ only between datasets when mean differences are very small.
Consequently, the following conclusions hold for all considered datasets.

The main conclusion from the experiments is that the classification obtained from worst-case analytical bounds also holds in our (average-case) experiments.
PAV, seq-PAV, rev-seq-PAV, 2-Geometric, and seq-\phragmen{} achieve very high utilitarian ratios, surpassed only by 1.5-Geometric and AV itself.
This is mirrored by our theoretical results as only PAV, seq-PAV, and seq-\phragmen{} achieve a $\Theta(\nicefrac{1}{\sqrt{k}})$ utilitarian guarantee.
The utilitarian ratios of SLAV are slightly below these rules.
For $5$-Geometric, Monroe, seq-CC, and CC we observe significantly lower utilitarian ratios. 
Again, this is reflected in our worst-case analysis: these rules have a $\Theta(\nicefrac{1}{k})$ utilitarian guarantee.
When considering all datasets, the mean utilitarian ratios can be ranked as follows (rules in curly brackets appear in different orders in the datasets):
\begin{align*}
\text{AV             } &>
\text{$1.5$-Geometric  $>$
$\{$PAV,            
$2$-Geometric,  
seq-PAV,        
rev-seq-PAV,    
seq-\phragmen}\}\\
& >  
\text{SLAV         $>$  
$\{5$-Geometric, 
Monroe$\} >$         
$\{$seq-CC,         
MAV,            
CC}\}.     
\end{align*}
This ranking is more fine-grained than the worst-case analysis but still compatible with the more coarse classification obtained from sorting by worst-case bounds.

Now considering the representation ratios, we see almost optimal performance of seq-CC, Monroe, 5-Geometric, and SLAV,
and good performance of PAV, seq-PAV, rev-seq-PAV, seq-\phragmen{}, and 2-Geometric.
Minor variations within these groups depend on the chosen dataset; also note that not all differences are statistically significant (in particular in the preflib dataset).
As before, we can rank the rules according to their mean representation ratios and obtain the following ranking that is consistent with all four datasets:
\begin{align*}
\text{CC             } &>
\text{$5$-Geometric} >  
\{\text{Monroe,         
SLAV,           
seq-CC}\}         >
\{\text{seq-PAV, rev-seq-PAV,
$2$-Geometric}\}\\ &>  
\text{PAV} >            
\text{seq-\phragmen} >  
\text{$1.5$-Geometric} >
\{\text{MAV,            
AV}\}.           
\end{align*}

When looking at the three Geometric rules considered here, we see in \Cref{fig:preflib,fig:uniform} the transition from AV to CC as our theoretical findings predict (cf. \Cref{fig:pgeom}): 1.5-Geometric is close to AV, whereas 5-Geometric resembles CC; 2-Geometric performs very similarly to PAV.

Our results indicate that PAV, seq-PAV, rev-seq-PAV, seq-\phragmen{}, and 2-Geometric provide the best compromise between AV and CC.
Note, however, among those seq-PAV, rev-seq-PAV, and seq-\phragmen{} are computable in polynomial time---which does not translate to inferior performance.
If a bit more emphasis is put on the representation criterion, then SLAV appears to be the best choice. 
We did not investigate seq-SLAV, but expect a similar behavior (as it is the case for PAV and seq-PAV).
Also note that seq-PAV and rev-seq-PAV are virtually indistinguishable in these experiments.

It is important to note that proportionality is not the explaining factor for the strong performance of the aforementioned rules: while PAV satisfies the strong proportionality axiom ``extended justified representation'' \citep{justifiedRepresenattion}, seq-\phragmen{} satisfies the weaker variant ``proportional justified representation'' \citep{pjr17}, and seq-PAV and rev-seq-PAV only satisfy the even weaker ``D'Hondt proportionality'' axiom \citep{BLS17a}, 2-Geometric is not proportional at all.
This further corroborates our result from \Cref{sec:propcompromise} that proportionality should not be viewed as an optimal compromise between AV and CC.

Sometimes the goal is not to balance the performance with respect to utilitarian and representation ratios, but rather to put emphasis on one of them.
If the desiderata is a high utilitarian efficiency, our experiments recommend a $p$-Geometric rule with $p<2$.
On the contrary, if much weight is put on representing the voters, then CC, seq-CC, Monroe, and $p$-Geometric rules for $p\geq 5$ shine. These rules, however, severely lack in achieving utilitarian welfare.
Thus, as mentioned before, a recommendable choice is SLAV, which achieves very good representation ratios and still very solid utilitarian ratios (much better than the aforementioned rules).

\section{A Pareto Efficiency Axiom}\label{sec:efficiency_axiom}

In this section, we provide a complementary axiomatic analysis concerning the principle of individual excellence.
This analysis should show that it is difficult to find axioms that can separate those rules that fulfill this principle rather well and those that do not.
Note that this is exactly what we achieved with the concept of utilitarian guarantee (i.e., with a non-axiomatic approach).

We approach this problem by formulating a weak version of Pareto efficiency, which states that only Pareto optimal committees should be selected.
Pareto optimality is usually defined based on a notion of utility; we stick with our general assumption that voters derive utility from the number of approved candidates in the committee.\footnote{We note that this specific variant of Pareto efficiency is the same as Pareto optimality for dichotomous preferences with respect to the responsive set extension as studied by \citet{aziz2020computing}.}

\todo{change definition to survey-def.}

\begin{definition}
Consider a committee size $k \in \naturals$, two committees $W_1, W_2 \in \mathcal{S}_k(C)$ and an approval profile $A \in \calA$. We say that \emph{$W_1$ dominates $W_2$} in a $A$ if for each voter $i \in N$ we have that $|W_1 \cap A(i)| \geq |W_2 \cap A(i)|$, and if there exists a voter $j$ such that  $|W_1 \cap A(j)| > |W_2 \cap A(j)|$.
An undominated committee is \emph{Pareto optimal}.

An ABC rule $\calR$ satisfies \emph{Pareto efficiency} if for each profile $A \in \calA$ and each committee size~$k$ there exists no committee $W \in \mathcal{S}_k(C)$ that dominates each committee in $\calR(A, k)$.
\end{definition}

This axiom is rather weak since it requires that there must not exist a committee that dominates \emph{all} winning committee. An alternative would be to consider an axiom which requires winning committees are Pareto optimal. We chose this weaker variant as we were looking for a minimal axiom that could capture the idea of individual excellence.

As Pareto efficiency is often considered to be a minimal requirement and the chosen notion of utility aligns with the principle of individual excellence, one could expect that this axiom is satisfied by voting rules that fulfill this principle.
However, our analysis shows that many sensible rules do not satisfy this basic axiom, and those that satisfy it include individually excellent rules as well as diverse rules.

First, we make the rather surprising observation that seq-\phragmen{} does not satisfy Pareto efficiency.

\begin{example}\label{ex:seq-phrag-pareto}
Consider the set of 36 voters, and five candidates, $c_1, \ldots, c_5$. By $N(c)$ we denote the set of voters who approve $c$. Assume that:
\begin{align*}
& N(c_1) = \{1, \ldots, 20\}; \quad && N(c_2) = \{11, \ldots, 28\}; \quad && N(c_3) = \{1, \ldots, 10, 29, \ldots, 36\}; \\
& N(c_4) = \{21, \ldots, 36\}; \quad && N(c_5) = \{1, \ldots, 19\}. &&
\end{align*}
Phragm\'{e}n's sequential rule will select $c_1$ first, $c_4$ second, and $c_5$ third, yet committee $\{c_1, c_4, c_5\}$ is dominated by $\{c_1, c_2, c_3\}$. %
\end{example}

We note that the violation of Pareto efficiency is not an artifact of the rule being sequential (and so, in some sense ``suboptimal''). 
The following example shows that also leximin-Phragm\'{e}n does not satisfy Pareto efficiency. In addition, the same example shows that Monroe's rule does not satisfy it either.

\begin{example}\label{ex:leximin-phrag-pareto}
Consider 24 voters, and four candidates, $c_1, \ldots, c_4$, with the following preferences:
\begin{align*}
& N(c_1) = \{3, \ldots, 22\}; \quad && N(c_2) = \{1, 2, 23, 24\};  \\
& N(c_3) = \{2, \ldots, 12\}; \quad && N(c_4) = \{13, \ldots, 23\}.
\end{align*}
Phragm\'{e}n's leximin rule and Monroe's rule select $\{c_3, c_4\}$, which is dominated by~$\{c_1, c_2\}$.
\end{example}

Greedy Monroe, seq-CC, and seq-PAV do not satisfy Pareto efficiency either. Intuitively, this is due to their sequential nature; the corresponding example can be found in Appendix~\ref{app:axiom}. 
All the remaining rules that we consider satisfy Pareto efficiency.

\newcommand{\propositionParetoSatisfiedAVCCPAVMAV}{AV, CC, PAV, $p$-Geometric, and MAV satisfy Pareto efficiency.}

\begin{proposition}\label{prop:AV_CC_PAV_satisfy_eff}
\propositionParetoSatisfiedAVCCPAVMAV
\end{proposition}

Moreover, it is easy to see that all Thiele methods satisfy Pareto efficiency.
Note that winning committees according to CC and MAV may contain dominated committees (this does not violate the Pareto efficiency axiom as stated above). For AV, PAV, and $p$-Geometric the stronger property holds that any winning committee must be Pareto optimal.

Our conclusion from the above analysis can be summarized as follows. First, we observe that some very sensible rules do not satisfy this basic form of Pareto efficiency---in particular, none of the two variants of the \phragmen{} rule does. Second, we find this axiom is not really related to the utilitarian efficiency of multiwinner rules, as it is satisfied, e.g., by CC but not by seq-\phragmen.
We believe that this analysis
illustrates the problem of axiomatically separating voting rules with strong and weak utilitarian guarantees, e.g., those with a utilitarian guarantee of $\Theta(\nicefrac{1}{\sqrt{k}})$ (such as AV, PAV, and seq-\phragmen) and those with a utilitarian guarantee of $\Theta(\nicefrac{1}{{k}})$ (such as Monroe and CC).
In particular, an axiom that achieves such a separation cannot rely on the notion of Pareto optimality.

\section{Conclusions and Directions for Future Research}\label{sec:concl}

In this work, we assess qualities of multiwinner rules with respect to a utilitarian and a representation-focused criteria. Our results can help to understand the landscape of multiwinner rules, specifically how they behave with respect to these two contradictory goals.

We have mentioned briefly in the introduction that the representation guarantee can be viewed as an egalitarian criterion. We want to discuss this claim and discuss other approaches to measure how egalitarian an ABC rule is.
A classic egalitarian criterion is to maximize the utility of the least-satisfied voter~\citep{moulinAxioms}, or, in the setting of ABC rules, to maximize the number of approved candidates in the committee for the voter with fewest of them \citep{AFGST-egalitarian}. If we took this criterion, then none of the rules that we consider in this paper---except CC---would have a non-zero worst-case guarantee.\footnote{Assume $k$ is even, and consider a profile with $n-1$ voters approving candidates $C_1 = \{c_1, \ldots, c_{2k}\}$, and one voter approving $C_2 = \{c_{2k+1}, \ldots, c_{3k}\}$. If the number of voters $n$ is sufficiently large, then all the rules that we consider in this paper (all the rules listed in \Cref{sec:prelim}) except for Approval Chamberlin--Courant rule, would pick $k$ candidates from $C_1$. Therefore, in this profile, the utility of the least-satisfied voter would be equal to zero. Selecting $\nicefrac{k}{2}$ candidates from $C_1$ and $\nicefrac{k}{2}$ from $C_2$, would result in the utility of the worst-off voter equal to $\nicefrac{k}{2}$. Thus, according to this criterion, the worst-case ratio of the quality of the elected committee divided by the quality of the optimal committee is for almost all known rules equal to zero.} 
Hence, the analysis of this criterion would not lead to meaningful theoretical results as it does not help to distinguish standard ABC rules.

Another natural choice for an egalitarian criterion is to maximize the leximin welfare of voters \citep{moulinAxioms}, i.e., to first maximize the number of voters with at least one approved committee member, then to break possible ties between optimal committees by maximizing the number of voters with at least two approved committee members, etc.
If we assume that we cannot guarantee each voter an approved candidate in the committee, this criterion is exactly the same as our representation guarantee.
Therefore, we think it is justified to view the representation guarantee as a somewhat egalitarian measure.
Note that the leximin criterion itself is not suitable as a basis for a quantitative guarantee as it does not rank committees by scores.

\medskip

Our work can be extended in several directions.
First, we have focused on approval-based multiwinner rules---a natural next step is to perform a similar analysis for multiwinner rules that take ranked ballots as input.
Second, we have chosen AV and CC as extreme notions that represent diversity and individual excellence. Another natural approach would be to look at quantitative notions of proportionality: can such an approach encompass and extend the ``proportionality hierarchy'' of D'Hondt proportionality \citep{lac-sko:t:abc-approval-multiwinner,BLS17a}, justified representation \citep{justifiedRepresenattion}, proportional justified representation \citep{pjr17}, and extended justified representation \citep{justifiedRepresenattion}?
Recent work on voting rules approximating the core property~\citep{fain2018fair,cheng2019group,peters2019proportionality} and on measuring the average satisfaction of cohesive groups~\citep{pjr17,AEHLSS18,proprank,skowron:prop-degree} can be seen as steps in that direction.
Third, our axiomatic analysis raises important questions: Does there exist a natural variant of the \phragmen{} rule that is Pareto efficient? And is there a meaningful axiomatic property that separates ABC rules with strong utilitarian or representation guarantees from those with weak guarantees?

As we have noted in the introduction, we assume that there is a linear relation between a voter's utility and the number of approved committee members---this directly leads to the definition of the utilitarian guarantee that is adopted in this paper. 
However, there are also other well-grounded ways in which the utility of a voter can be defined. For example, one can assume diminishing marginal utilities or one can consider utility functions that count not only approved committee members but also non-approved non-selected candidates.
The latter kind of utility functions is used in the definitions of ordered weighted average (OWA) rules that span the spectrum between AV and MAV~\citep{ABLMR-hamingDistances}, in the definition of dissatisfaction counting rules~\citep{lac-sko:t:multiwinner-strategyproofness}, and in the context of distance-based belief merging \citep{KoniecznyLM02,KoniecznyP02,konieczny2005propositional,aaai/propbm}. 
When viewed from this perspective, the Approval Chamberlin--Courant rule could be interpreted as maximizing utilitarian social welfare when assuming a different utility function: the utility of a voter equals one if she is represented by at least one member of the elected committee, and zero otherwise. This is not the perspective we have taken in this paper; we view the difference between AV and CC (and other Thiele rules) in the way in which preferences of voters (and thus their utility functions) are aggregated rather than in the way the utility of an individual voter is assumed to be. Analyzing the utilitarian welfare of multiwinner rules assuming different kinds of utility functions (such as those mentioned here) is an interesting direction for future research.

\section*{Acknowledgements}

Martin Lackner was supported by the European Research Council (ERC) under grant number 639945 (ACCORD) and by the Austrian Science Foundation FWF, grant P31890. 
Piotr Skowron was supported by a Humboldt Research Fellowship for Postdoctoral Researchers and by the Foundation for Polish Science within the Homing programme (Project title: "Normative Comparison of Multiwinner Election Rules").
The authors thank Piotr Faliszewski for valuable discussions while visiting AGH University of Science and Technology, Krakow.

\bibliographystyle{abbrvnat}
\bibliography{main}

\appendix

\section{Proof Details from \Cref{sec:worst_case_guarantees}}

\begin{reptheorem}{thm:monroe-combined}
\thmmonroecombined
\end{reptheorem}

\begin{proof}[Proof of the utilitarian guarantee]
First, let us consider the utilitarian guarantee of the Greedy Monroe rule. To see the lower bound of $\nicefrac 1 k$, let $A$ be an approval profile and let $\bar c$ denote the candidate who is approved by most voters.
For the sake of clarity we assume that $k$ divides $n$; the proof can be generalized to hold for arbitrary $n$.
Clearly, for any committee $W$ it holds that $\score_{\av}(A, W) \leq k |N(\bar c)|$. If $|N(\bar c)| \leq \frac{n}{k}$, then the Greedy Monroe rule in the first step will select $\bar c$. Otherwise, it will select some candidate approved by at least $\frac{n}{k}$ voters, and will remove $\frac{n}{k}$ of them from $A$. By a similar reasoning we can infer that in the second step the rule will pick a candidate who is approved by at least $\min\left(\frac{n}{k}, |N(\bar c)| - \frac{n}{k}\right)$ voters; and in general, that in the $i$-th step the rule will pick the candidate who is approved by at least $\min\left(\frac{n}{k}, |N(\bar c)| - \frac{n(i-1)}{k}\right)$ voters. As a result, we infer that number of voters that have at least one approved candidate in the chosen committee is at least
\[\sum_{i=1}^k \min\left(\frac{n}{k}, |N(\bar c)| - \frac{n(i-1)}{k}\right)=|N(\bar c)|.\]
Hence the utilitarian guarantee of Greedy Monroe is at least $\nicefrac 1 k$. 

To see that the same lower bound holds for Monroe's rule, we distinguish two cases; let $W$ be a winning committee.
If $\bar c\in W$, then $\score_{CC}(A,W)\geq |N(\bar c)|$ and we are done.
If $\bar c\notin W$ and $\score_{CC}(A,W)< |N(\bar c)|$, then there is a committee with a higher Monroe-score that contains $\bar c$; a contradiction.

Now, consider the following instance witnessing that the utilitarian guarantee of Greedy Monroe is at most $\frac{1}{k}$.
Let $n=k\cdot (x+1)$ and let $A$ be a profile with $n$ voters.
Let $W\subseteq C$ with $|W|=k$ and $c_1,\dots,c_{k}\notin W$.
We define profile $A$ as follows: 
we have $x$ voters that approve $W\cup\{c_1\}$ and one voter that approves only $\{c_1\}$,
we have $x$ voters that approve $W\cup\{c_2\}$ and one voter that approves only $\{c_2\}$, etc.
This defines in total $k\cdot (x+1)$ voters.
AV selects the committee $W$ with an AV-score of $xk^2$;
Greedy Monroe selects the committee $\{c_1\dots,c_{k}\}$ with an AV-score of $(x+1)k$.
We have a ratio of 
$
\frac{(x+1)}{xk}
$, 
which converges to $\frac{1}{k}$ for $x\to\infty$.
The same instance shows that the utilitarian guarantee of Monroe's rule is at most $\frac{1}{k}$.
\end{proof}

\begin{proof}[Proof of the representation guarantee]
We move on to proving bounds for the representation guarantee.
First, for the sake of contradiction let us assume that there exists a profile $A$ where the representation guarantee of Greedy Monroe is below $\frac{1}{2}$. Let $W_{\cc}$ and $W_{M}$ be the committees winning in $A$ according to CC and Greedy Monroe, respectively. Let $\phi$ be an assignment of the voters to the committee members obtained during the construction of $W_{M}$; we say that a voter is represented if it is assigned to a member of $W_{M}$ who she approves of. Since $\score_{\cc}(A, W_{M}) < \frac{1}{2} \cdot \score_{\cc}(A, W_{\cc})$, by the pigeonhole principle we infer that there exists a candidate $c \in W_{\cc} \setminus W_{M}$ who is approved by $x$ unrepresented voters, where:
\begin{align*}
x &\geq \frac{\score_{\cc}(A, W_{\cc}) - \score_{\cc}(A, W_{M})}{k} \geq \frac{2 \score_{\cc}(A, W_{M}) - \score_{\cc}(A, W_{M})}{k} = \frac{\score_{\cc}(A, W_{M})}{k}.
\end{align*}
Similarly, by the pigeonhole principle we can infer that there exists a candidate $c' \in W_{M}$ who is represented by at most $\frac{\score_{\cc}(A, W_{M})}{k}$ voters. Thus, Greedy Monroe would select $c$ rather than $c'$, a contradiction. A similar argument can be made to show that the representation guarantee of Monroe's rule is $\geq \frac{1}{2}$. 
 
Now, consider the following approval profile. There are $2k+1$ candidates, $c_1, \ldots, c_{2k+1}$, and $2k$ disjoint equal-size groups of voters, $N_1, \ldots, N_{2k}$. For each $i \in [2k]$, candidate $c_i$ is approved by all voters from $N_i$. Candidate $c_{2k+1}$ is approved by all voters from $N_1 \cup \cdots \cup N_k$. One of the winning committees according to Monroe and Greedy Monroe is $\{c_1, \ldots, c_{k-2}, c_{k+1}, c_{2k+1}\}$, which has a CC-score of $\frac{n}{k} + (k-1)\frac{n}{2k}$. On the other hand, $\{c_{k+1}, \ldots, c_{2k-1}, c_{2k+1}\}$ has a CC-score of $n - \frac{n}{k}$. Thus, the representation guarantee of Monroe and Greedy Monroe is at most:
\begin{align*}
\frac{\frac{n}{k} + \frac{n(k-1)}{2k}}{n - \frac{n}{k}} = \frac{\frac{k + 1}{2k}}{\frac{k-1}{k}} = \frac{k + 1}{2k-2} = \frac{1}{2} + \frac{1}{k-1} \text{.}
\end{align*}
This completes the proof.
\end{proof}

\begin{reptheorem}{thm:guarantees_for_pav}
\thmpav
\end{reptheorem}

\begin{proof}[Proof of the utilitarian guarantee] See main text.
\end{proof}
\begin{proof}[Proof of the representation guarantee]
We first prove a lower bound of $\nicefrac{1}{2}$ for the representation guarantee of PAV.
Consider an approval-based profile $A$ and a PAV winning committee $W_{\pav}$. 
For each voter $i \in N$ we set $w_i = |W_{\pav} \cap A(i)|$. Let $W_{\cc}$ be a committee winning according to the Chamberlin--Courant rule.
For each two candidates, $c \in W_{\pav}$ and $c' \in W_{\cc}$, let $\Delta(c', c)$ denote the change of the PAV-score of $W_{\pav}$ due to replacing $c$ with $c'$. 
Recall from the proof of PAV's utilitarian guarantees that
if we replace a candidate $c' \in W_{\pav}$ with $c$, the PAV-score of $W_{\pav}$ will change by:
\begin{align*}
\Delta(c', c) \geq \sum_{i \in N(c')} \frac{1}{w_i+1} - \sum_{i \in N(c)} \frac{1}{w_i} \text{.}
\end{align*}
Let us now consider an arbitrary bijection $\tau\colon W_{\pav} \to W_{\cc}$, matching members of $W_{\pav}$ with the members of $W_{\cc}$.  
We compute the sum:
\begin{align*}
\sum_{c \in W_{\pav}}\Delta(\tau(c), c) &\geq \sum_{c' \in W_{\cc}}\sum_{i \in N(c')} \frac{1}{w_i+1} - \sum_{c \in W_{\pav}}\sum_{i \in N(c)} \frac{1}{w_i} \\
                             &= \sum_{i \in {N({W_{\cc}})}} \underbrace{\sum_{c' \in W_{\cc} \cap A(i)} \frac{1}{w_i+1}}_{\geq \frac{1}{w_i+1}} - \sum_{i \in N({W_{\pav}})} \underbrace{\sum_{c \in W_{\pav} \cap A(i)} \frac{1}{w_i}}_{=1} \\
                             &\geq \sum_{i \in {N({W_{\cc}})}} \frac{1}{w_i+1} - |N({W_{\pav}})| \geq \sum_{i \in {N({W_{\cc}})\setminus N({W_{\pav}})}} 1  - |N({W_{\pav}})|  \\
                             &\geq |N({W_{\cc}}) \setminus N({W_{\pav}})| - |N({W_{\pav}})|\\
                             &\geq |N({W_{\cc}})| - |N({W_{\pav}})| - |N({W_{\pav}})| \\ 
                             &= |N({W_{\cc}})| - 2|N({W_{\pav}})| \text{.}
\end{align*}
Since $W_{\pav}$ is an PAV-optimal committee, we know that for each $c \in W_{\pav}$, it holds that $\Delta(\tau(c), c) \leq 0$. Consequently, $\sum_{c \in W_{\pav}}\Delta(\tau(c), c) \leq 0$, and so we get that $|N_{W_{\cc}}| - 2|N_{W_{\pav}}| \leq 0$, 
Consequently, we get that $|N_{W_{\pav}}| \geq \frac{|N_{W_{\cc}}|}{2}$, which shows that the representation guarantee of PAV is at least equal to $\nicefrac{1}{2}$.

Now, we will prove the upper bound using the following construction. Let $n$, the number of voters, be divisible by $2k$. 
The set of candidates is $X\cup Y$ with $X = \{x_1, \ldots, x_k\}$ and $Y = \{y_1, \ldots, y_k\}$.
There are $\nicefrac{n}{2}$ voters who approve $X$. Further, for each $i\in[k]$, there are $\frac{n}{2k}$ voters who approve candidate $y_i$. All committees that contain at least $k-1$ candidates from $X$ are winning according to PAV, among them $X$ itself. Committee $X$ has a CC-score of $\nicefrac{n}{2}$. The optimal CC committee consists of a single candidate from $X$ and $(k-1)$ candidates from $Y$---this would give a CC-score of $\frac{n}{2} + (k-1)\cdot \frac{n}{2k} = n \cdot \frac{2k-1}{2k}$. Thus, the representation guarantee of PAV is at most equal to:
\begin{align*}
\frac{2k}{4k-2} = \frac{1}{2} + \frac{1}{4k-2} \text{.}
\end{align*}
This completes the proof.
\end{proof}

\begin{reptheorem}{thm:guarantees_for_slav}
\thmslav
\end{reptheorem}

\begin{proof}[Proof of the utilitarian guarantee] 
Here, we only explain how the differences in comparison to the proof for PAV.
For the lower bound we get that:
\begin{align*}
\Delta&(c, c') \geq \sum_{i \in N(c)} \frac{1}{2w_i+1} - \sum_{i \in N(c')} \frac{1}{2w_i-1} \text{.}
\end{align*}
And so:
\begin{align*}
\sum_{c' \in W_{\slav}}\Delta(c, c') \geq k \sum_{i\in N(c)} \frac{1}{2w_i+1} - \npav \text{.}
\end{align*}
Consequently, $k \sum_{i \in N(c)} \frac{1}{2w_i+1} - \npav \leq 0$ and $\sum_{i \in N(c)} \frac{1}{2w_i+1} \leq \frac{\npav}{k}$. 
We use the inequality between the harmonic and arithmetic mean:
\begin{align*}
\frac{\npav}{k} \geq \sum_{i \in N(c)} \frac{1}{2w_i+1} \geq \frac{n_c^2}{\sum_{i \in N(c)} (2w_i+1)}.
\end{align*}
From this it follows that:
\begin{align*}
kn_c \leq \frac{\npav \big(\sum_{i \in N(c)} 2w_i + n_c\big)}{n_c} = \frac{2\npav \sum_{i \in N(c)} w_i}{n_c} + \npav \text{.}
\end{align*}
Now, let us consider two cases. If $\npav \leq n_c \sqrt{\nicefrac{k}{2}}$, then we observe that:
\begin{align*}
&\frac{\score_{\av}(A,W_{\av})}{\score_{\av}(A,W_{\slav})} \leq \frac{\sum_{i \in N} w_i + k n_c}{\sum_{i \in N} w_i} = 1 + \frac{k n_c}{\sum_{i \in N} w_i} 
\\&\leq 1 + \frac{\frac{2\npav \sum_{i \in N(c)} w_i}{n_c} + \npav }{\sum_{i \in N} w_i}\leq 2 + \frac{2\npav}{n_c} \leq \sqrt{2k} + 2.
\end{align*}
If $\npav \geq n_c \sqrt{\nicefrac{k}{2}}$, then similarly as in the proof for PAV:
\begin{align*}
\frac{\score_{\av}(A,W_{\av})}{\score_{\av}(A,W_{\slav})} &\leq \frac{\sum_{i \in N} w_i + k n_c}{\sum_{i \in N} w_i} = 1 + \frac{k n_c}{\sum_{i \in N} w_i} \\
&\leq 1 + \nicefrac{k n_c}{\npav} \leq 1 + \sqrt{2k}.
\end{align*}
In either case we have that $\frac{\score_{\av}(A,W_{\slav})}{\score_{\av}(A,W_{\av})} \geq \frac{1}{2 + \sqrt{2k}}$.
This yields the required lower bound.

We adapt the construction used in the proof of \Cref{thm:proportional_guarantees} to obtain a smaller upper bound ($\frac{3}{2\lfloor \sqrt{k} \rfloor}$ instead of $\frac{2}{\lfloor \sqrt{k} \rfloor}-\frac 1 k$).
We change the instance so that the first group consists of $2\lfloor \sqrt k \rfloor$ voters (these approve $\{x_1,\dots,x_k\}$);
the remaining $k-2\lfloor \sqrt k \rfloor$ voters independently approve single candidates (voter $i$ approves $y_i$).
Towards a contradiction, assume that the first group has more than $\lfloor \sqrt k \rfloor$ approved candidates in the winning committee.
For this committee, if we exchange one $x$ candidate with a $y$ candidate, the SLAV score for the first group decreases by at most
\begin{align*}
2\lfloor \sqrt k \rfloor\cdot \frac{1}{2(\lfloor \sqrt k \rfloor+1)-1}< 2\lfloor \sqrt k \rfloor\cdot \frac{1}{2\lfloor \sqrt k \rfloor}=1,
\end{align*}
but the SLAV score of the voter approving the $y$ candidate would increase by one.
This is a contradiction that to the assumption that the first group has more than $\lfloor \sqrt k \rfloor$ approved candidates in the winning committee.
Now let $W$ be a winning committee. As in the proof of \Cref{thm:proportional_guarantees}, we see that
\begin{align*}
\frac{\score_{\av}(A, W)}{\score_{\av}(A, W_{\av})} \leq \frac{2 \lfloor \sqrt{k} \rfloor \cdot \lfloor \sqrt{k} \rfloor + (k-\lfloor \sqrt{k} \rfloor)\cdot 1 }{2 \lfloor \sqrt{k} \rfloor \cdot k} \leq \frac{1}{\lfloor \sqrt{k} \rfloor} + \frac{1}{2\lfloor \sqrt{k} \rfloor}=\frac{3}{2\lfloor \sqrt{k} \rfloor}\text{.}
\end{align*}

\end{proof}

\begin{proof}[Proof of the representation guarantee]
As before, we will explain how to adapt the proof for PAV. For the lower bound, we now have:
\begin{align*}
\sum_{c \in W_{\slav}}\Delta(\tau(c), c) &\geq \sum_{c' \in W_{\cc}}\sum_{i \in N(c')} \frac{1}{2w_i+1} - \sum_{c \in W_{\slav}}\sum_{i \in N(c)} \frac{1}{2w_i-1} \\
                             &= \sum_{i \in {N({W_{\cc}})}} \underbrace{\sum_{c' \in W_{\cc} \cap A(i)} \frac{1}{2w_i+1}}_{\geq \frac{1}{2w_i+1}} - \sum_{i \in N({W_{\slav}})} \underbrace{\sum_{c \in W_{\slav} \cap A(i)} \frac{1}{2w_i-1}}_{= \frac{w_i}{2w_i - 1} \leq 1} \\
                             &\geq \sum_{i \in {N({W_{\cc}}) \setminus N({W_{\slav}})}} \frac{1}{2w_i+1} + \sum_{i \in {N({W_{\cc}}) \cap N({W_{\slav}})}} \left(\frac{1}{2w_i+1} - \frac{w_i}{2w_i - 1}\right)\\
                             &\qquad - |N({W_{\slav}}) \setminus N({W_{\cc}})| \text{.}
\end{align*}
Now, observe that $w_i = 0$ for each $i \in N({W_{\cc}}) \setminus N({W_{\slav}})$ and that $w_i \geq 1$ for each $i \in N({W_{\cc}}) \cap N({W_{\slav}})$. Further, through simple arithmetic calculations we can show that for $w_i \geq 1$ it holds that $\frac{1}{2w_i+1} - \frac{w_i}{2w_i - 1} \geq -\frac{2}{3}$. Thus, we can continue our calculations:

\begin{align*}
\sum_{c \in W_{\slav}}\Delta(\tau(c), c) &\geq |N({W_{\cc}}) \setminus N({W_{\slav}})| - \frac{2}{3} |N({W_{\slav}}) \cap N({W_{\cc}})|- |N({W_{\slav}}) \setminus N({W_{\cc}})|\\
                                         &= |N({W_{\cc}})| - \frac{2}{3} |N({W_{\slav}}) \cap N({W_{\cc}})|- |N({W_{\slav}})| \\
                                         &\geq |N({W_{\cc}})| - \frac{5}{3} |N({W_{\slav}})| \text{.}
\end{align*}
By the same reasoning as before we can infer that $|N_{W_{\slav}}| \geq \frac{3|N_{W_{\cc}}|}{5}$.

For the upper bound we will use the following construction (similar as in the proof for PAV).
The set of candidates is $X\cup Y$ with $X = \{x_1, \ldots, x_k\}$ and $Y = \{y_1, \ldots, y_{k-1}\}$. Let $z$ be an integer divisible by $2k-1$. 
There are $z$ voters who approve $X$. Further, for each $i\in[k-1]$, there are $\frac{z}{2k-1}$ voters who approve candidate $y_i$. 
Thus:
\begin{align*}
n = z + (k-1) \cdot \frac{z}{2k-1} = z \cdot \frac{3k-2}{2k-1}\text{.}
\end{align*}
Committee $X$ is winning according to SLAV (tied with other committees)---and $X$ has a CC-score of $z$. The optimal CC committee consists of a single candidate from $X$ and all $(k-1)$ candidates from $Y$---this would give a CC-score of $n$. Thus, the representation guarantee of SLAV is at most equal to:
\begin{align*}
\frac{z}{n} = \frac{2k-1}{3k-2}=\frac{2}{3}+\frac{1}{9k-6}\text{.}
\end{align*}
\end{proof}

\begin{reptheorem}{thm:seq_pav_guarantee-combined} 
\thmseqpavguaranteecombined
\end{reptheorem}

\begin{proof}[Proof of the utilitarian guarantee]
Since seq-PAV satisfies lower quota~\citep{BLS17a},\footnote{An ABC rule $\calR$ satisfies \emph{lower quota} if for every $k \in \naturals$, every party-list profile $A$, and party $N'$ it holds that every winning committee from $\calR(A, k)$ contains at least $\left\lfloor\frac{k\cdot |N'|}{|N|}\right\rfloor$ candidates that are approved by $N'$.} it also satisfies weak proportionality. Hence the upper bound of $\frac{2}{\lfloor \sqrt{k} \rfloor} - \frac{1}{k}$ follows from \Cref{prop:AvGuaranteeOfCC}. In the remaining part of the proof we will prove the lower bound.

For $k=1$, seq-PAV is AV and hence the utilitarian guarantee is 1. For $k=2$, in the first step the AV-winner is chosen and hence we have a utilitarian guarantee for $k=2$ is $\frac 3 4\geq \frac{1}{2\sqrt{2}}$.
Now assume that $k \geq 3$.
Let $W_{\pav}^{(j)}$ denote the first $j$ candidates selected by sequential PAV; in particular, $W_{\pav}^{(0)} = \emptyset$. Let $w_j$ denote the candidate selected by sequential PAV in the $j$th step, thus $w_j$ is the single candidate in the set $W_{\pav}^{(j)} \setminus W_{\pav}^{(j-1)}$.
Let $x_{i, j} = |W_{\pav}^{(j)} \cap A(i)|$. Next, let $W_{\av}$ be the optimal committee according to Approval Voting, and let $s_{\av} = \score_{\av}(A,W_{\av})$.

If at some step $j$ of the run of sequential PAV, it happens that the AV-score of $W_{\pav}^{(j)}$, which is $\sum_{i \in N}x_{i, j}$, is greater or equal than $\frac{s_{\av}}{2\sqrt{k}}$, then our hypothesis is clearly satisfied. Thus, from now on, we assume that for each $j$ we have that $\sum_{i \in N}x_{i, j} < \frac{s_{\av}}{2\sqrt{k}}$. Also, this means that in each step there exists a candidate $c$ from $W_{\av} \setminus W_{\pav}$ who is approved by $n_c \geq \frac{s_{\av} - \frac{s_{\av}}{2\sqrt{k}}}{k} \geq \frac{s_{\av}}{k}(1 - \frac{1}{2\sqrt{3}})$ voters. Let $n_c = |N(c)|$.

Let $\Delta p_j$ denote the increase of the PAV-score due to adding $w_{j+1}$ to $W_{\pav}^{(j)}$. %
Using the inequality between harmonic and arithmetic mean, 
we have that:
\begin{align*}
\Delta p_j &= \sum_{i \in N(c)} \frac{1}{x_{i, j} + 1} \geq \frac{n_c^2}{\sum_{i \in N(c)} x_{i, j} + n_c} > \frac{n_c^2}{\frac{s_{\av}}{2\sqrt{k}} + n_c} \\
           &\geq \frac{\left(\frac{s_{\av}}{k}(1 - \frac{1}{2\sqrt{3}})\right)^2}{\frac{s_{\av}}{2\sqrt{k}} + \frac{s_{\av}}{k}(1 - \frac{1}{2\sqrt{3}})} \geq \frac{\left(\frac{s_{\av}}{k}(1 - \frac{1}{2\sqrt{3}})\right)^2}{\frac{s_{\av}}{\sqrt{k}}(\frac{1}{2} + \frac{1}{\sqrt{3}} - \frac{1}{6})} \\
           &= \frac{s_{\av}}{k\sqrt{k}} \cdot  \underbrace{\frac{\left(1 - \frac{1}{2\sqrt{3}}\right)^2}{\frac{1}{2} + \frac{1}{\sqrt{3}} - \frac{1}{6}}}_{\approx 0.56} >  \frac{s_{\av}}{2k\sqrt{k}} \text{.}
\end{align*}  
Since this must hold in each step of sequential PAV, we get that the total PAV-score of $W_{\pav}^{(k)}$ must be at least equal to $k \cdot \frac{s_{\av}}{2k\sqrt{k}} = \frac{s_{\av}}{2\sqrt{k}}$. Since the AV-score is at least equal to the PAV-score of any committee, we obtain a contradiction and conclude that $\score_{\av}(A,W_{\pav}^{(k)}) \geq \frac{s_{\av}}{2\sqrt{k}}$.
\end{proof} 

\begin{proof}[Proof of the representation guarantee lower bound]
Consider an approval profile $A$ and let $W_{\spav}$ and $W_{\cc}$ denote the winning committees in~$A$ according to seq-PAV and CC, respectively. Let $n_{\spav} = \score_{\cc}(A, W_{\spav})$ and $n_{\cc} = \score_{\cc}(A, W_{\cc})$. The total PAV-score of $W_{\spav}$ is at most equal to $n_{\spav}\H(k) \leq n_{\spav}(\log(k) + 1)$. Thus, at some step sequential PAV selected a committee member who improved the PAV-score by at most $\frac{n_{\spav}(\log(k) + 1)}{k}$. On the other hand, by the pigeonhole principle, we know that at each step of seq-PAV there exists a not-selected candidate whose selection would improve the PAV-score by at least $\frac{n_{\cc} - n_{\spav}}{k}$. Consequently, we get that 
\begin{align*}
\frac{n_{\spav}(\log(k) + 1)}{k} \geq \frac{n_{\cc} - n_{\spav}}{k} \text{.}
\end{align*} 
After reformulation we have that $n_{\spav} \geq \frac{n_{\cc}}{\log(k) + 2}$, which completes the proof.
\end{proof}

\begin{reptheorem}{thm:pgeom-combined}\thmpgeom
\end{reptheorem}

\begin{proof}[Proof of the utilitarian guarantee]
We use the same notation as in the proof of \Cref{thm:guarantees_for_pav} with the difference that instead of $W_{\pav}$ (denoting a PAV winning committee) we will use $W_{\pgeom}$, denoting a committee winning according to the $p$-Geometric rule. 
For each two candidates, $c \in W_{\pgeom}$ and $c'$ arbitrary, let $\Delta_p(c', c)$ denote the change of the $p$-Geometric score of $W_{\pgeom}$ caused by replacing $c$ with $c'$. 
We obtain:
\begin{align*}
\sum_{c' \in W_{\pgeom}}\Delta_p(c, c') &= k \sum_{i \in N(c)} \left(\frac{1}{p}\right)^{w_i+1} - \sum_{i \in N} \sum_{c' \in W_{\pav} \cap A(i)} \left(\frac{1}{p}\right)^{w_i} \\
                                    &= k \sum_{i \in N(c)} \left(\frac{1}{p}\right)^{w_i+1} - \sum_{i \in N} w_i \left(\frac{1}{p}\right)^{w_i} 
\end{align*}
By using Jensen's inequality we get that $\sum_{i \in N(c)} \frac{1}{n_c} \cdot \left(\frac{1}{p}\right)^{w_i+1} \geq \left(\frac{1}{p}\right)^{\frac{\sum_{i \in N(c)}w_i + n_c}{n_c}}$. Thus:
\begin{align*}
\sum_{c' \in W_{\pgeom}}\Delta_p(c, c') &= k n_c \left(\frac{1}{p}\right)^{\frac{\sum_{i \in N}w_i}{n_c} + 1} - \sum_{i \in N} w_i \left(\frac{1}{p}\right)^{w_i} \\
                                      &\geq k n_c \left(\frac{1}{p}\right)^{\frac{\sum_{i \in N}w_i}{n_c} + 1} - \frac{1}{p}\sum_{i \in N} w_i
\end{align*}
Since we know that $\sum_{c' \in W_{\pgeom}}\Delta_p(c, c') \leq 0$, we have that:
\begin{align*}
\frac{1}{p}\sum_{i \in N} w_i \geq k n_c \left(\frac{1}{p}\right)^{\frac{\sum_{i \in N}w_i}{n_c} + 1}
\end{align*}
Let us set $r = \frac{kn_{c}}{\sum_{i \in N} w_i}$, and observe (similarly as in the proof of \Cref{thm:guarantees_for_pav}) that \[\frac{\score_{\av}(A,W_{\av})}{\score_{\av}(A,W_{\pgeom})} \leq 1 + r.\] We have that $p^{\frac{k}{r}} \geq r$. The equation $p^{\frac{k}{r}} = r$ has only one solution, $r= \frac{k \log(p)}{\W(k \log(p))}$. This gives $r \leq \frac{k \log(p)}{\W(k \log(p))}$ and proves that the utilitarian guarantee is greater than or equal to 
\begin{align*}
\frac{\W(k \log(p))} {k \log(p) + \W(k \log(p))}.
\end{align*}

Now, let us prove the upper bound on the utilitarian guarantee.
Let $z = \frac{k \log(p)}{\W(k \log(p))}$; in particular, by the properties of the Lambert function we have that $z = p^{\frac{k}{z}}$.
Consider the following instance. Let $x$ be a large integer so that $\left\lfloor x \cdot z \right\rfloor \approx xz$. (Formally, we choose an increasing sequence $\bar x$ so that $z\bar x - \left\lfloor z\bar x \right\rfloor\to 0$.)
Assume there are $\left\lfloor x \cdot z \right\rfloor$ voters who approve candidates $B = \{c_1, \ldots, c_k\}$. Additionally, for each candidate $c \in D = \{c_{k+1}, \ldots, c_{2k}\}$ there are $x$ distinct voters who approve $c$. For this instance the $p$-Geometric rule selects at most $\left\lceil \frac{k}{z} \right\rceil$ members from $B$: if more candidates from $B$ were selected, then replacing one candidate from $B$ with a candidate from $D$ would increase the $p$-Geometric-score by more than \[\frac{x}{p} - \left\lfloor x \cdot z\right\rfloor \cdot \left(\frac{1}{p}\right)^{\left\lceil\frac{k}{z}\right\rceil+1}> \frac{x}{p} - \frac{x}{p} \cdot z \cdot \left(\frac{1}{p}\right)^{\frac{k}{z}} =  \frac{x}{p} - \frac{x}{p} \cdot z \cdot \left(\frac{1}{z}\right) = 0,\] a contradiction. Thus, the AV-score of the committee selected by the $p$-Geometric rule would be smaller than $x \cdot z \cdot \left(1 + \frac{k}{z}\right) + kx = xz + 2kx$. Thus, we get that the utilitarian guarantee of the $p$-Geometric rule is at most equal to:
\begin{align*}
\frac{2kx + xz}{xzk} = \frac{1}{k} + \frac{2}{z} = \frac{1}{k} + \frac{2\W(k \log(p))}{k \log(p)} \text{.}
\end{align*}
\end{proof}

\begin{proof}[Proof of the representation guarantee]
Let $A$ be an approval profile and let $W_{\cc}$ and $W_{\pgeom}$ be two committees winning according to the Chamberlin--Courant and $p$-Geometric rule, respectively. Let $n_{\pgeom} = \score_{\cc}(A, W_{\pgeom})$ and $n_{\cc} = \score_{\cc}(A, W_{\cc})$.
We observe that:
\begin{align*}
\score_{\pgeom}(A, W_{\pgeom}) \leq n_{\pgeom}\left(\frac{1}{p} + \frac{1}{p^2} + \ldots\right) \leq n_{\pgeom}\cdot \frac{1}{p} \cdot \frac{1}{1 - \frac{1}{p}}
\end{align*}
and that:
\begin{align*}
\score_{\pgeom}(A, W_{\cc}) \geq n_{\cc} \cdot \frac{1}{p} \text{.}
\end{align*}
Consequently, from $ \score_{\pgeom}(A, W_{\pgeom}) \geq \score_{\pgeom}(A, W_{\cc})$ we get that:
\begin{align*}
n_{\pgeom}\cdot \frac{1}{1 - \frac{1}{p}} \geq p\cdot \score_{\pgeom}(A, W_{\pgeom}) \geq p\cdot \score_{\pgeom}(A, W_{\cc}) \geq n_{\cc} \text{,}
\end{align*}
which gives the lower bound on the representation guarantee.

Now, let us prove the upper bound. Fix a rational number $p$ and some large integer $x$ such that $px$ is integer.
First, let $k$ be even with $k=2k'$.
Let the set of candidates be $\{x_1,\dots,x_k\}\cup\{y_1,\dots,y_{k'}\}$.
There are $k'$ groups of voters who consists of $px$ voters; in each group voters approve some two distinct candidates from $\{x_1,\dots,x_k\}$. Additionally, there are $k'$ groups consisting of $x$ voters who approve some distinct candidate from $\{y_1,\dots,y_{k'}\}$. It is easy to see that for such instances the representation guarantee is at most equal to $\frac{k'px}{k'px + k'x} = \frac{p}{1+p}$.

Now, let $k$ be odd with $k=2k'+1$; the set of candidates is $\{x_1,\dots,x_{2k'+2}\}\cup\{y_1,\dots,y_{k'}\}$.
There are $k'+1$ groups of voters who consists of $px$ voters; in each group voters approve some two distinct candidates from $\{x_1,\dots,x_{2k'+2}\}$. Additionally, there are $k'$ groups consisting of $x$ voters who approve some distinct candidate from $\{y_1,\dots,y_{k'}\}$.
Now, we see that the for such instances the representation guarantee is at most equal to \[\frac{(k'+1)px}{(k'+1)px + k'x} = \frac{p}{p+1-\frac{1}{k'+1}}=\frac{p}{p+1-\frac{2}{k+2}}=\frac{p}{p+\frac{k}{k+2}}.\]
The upper bound for the odd case is larger and hence prevails.
\end{proof}

\begin{reptheorem}{thm:phragmen_bounds-combined}
\thmphrag
\end{reptheorem}

\begin{proof}[Proof of the utilitarian guarantee]
First, we will prove the lower bound of $\frac{1}{5\sqrt{k} + 1}$.
Consider an approval profile $A$, and let $W_{\phrag}$ and $W_{\av}$ be committees winning according to seq-Phragm\'{e}n and AV, respectively. W.l.o.g., we assume that $W_{\phrag} \neq W_{\av}$. For iteration $t$ we will use the following notation:
\begin{enumerate}[(1)]
\item Let $w^{(t)}_{\phrag}$ be the candidate selected by seq-Phragm\'{e}n in the $t$-th iteration. Further, let $w^{(t)}_{\av}$ be a candidate with the highest AV-score in $W_\av\setminus \{w^{(1)}_{\phrag},\dots, w^{(t-1)}_{\phrag}\}$.
\item Let $n^{(t)}_{\phrag} = |N(w^{(t)}_{\phrag})|$, and $n^{(t)}_{\av} = |N(w^{(t)}_{\av})|$.
\item Let $\ell_j(t)$ denote the total load assigned to voter $j$ until $t$. The \emph{maximum load} in iteration $t$ is $\max_{j\in N} \ell_j(t)$.
\item Let $\ell^{(t)}_{\av}$ denote the total load distributed to the voters from $N(w^{(t)}_{\av})$ until iteration $t$, and let $m^{(t)}_{\av}$ denote the maximum load assigned to a voter from $N(w^{(t)}_{\av})$ until~$t$, i.e., $m^{(t)}_{\av} = \max_{j\in N(w^{(t)}_{\av})} \ell_j(t)$.
\end{enumerate}

We will use an argument based on a potential function $\Phi \colon [0,t]\to \reals$, which we maintain during each iteration of seq-Phragm\'{e}n. Let $\Phi(0)=0$. In iteration $t$, we increase the potential function by $\big(5\sqrt{k} + 1\big) \cdot n^{(t)}_{\phrag}$ and decrease it by $n^{(t)}_{\av}$, i.e., \[\Phi(t) = \Phi(t-1) +  \big(5\sqrt{k} + 1\big) \cdot n^{(t)}_{\phrag} - n^{(t)}_{\av}.\] Our goal is to show that $\Phi(k)\geq 0$.
If we know that $\Phi(k)>0$, we can infer that
\[\sum_{t=1}^k\big(5\sqrt{k} + 1\big) \cdot n^{(t)}_{\phrag}- \sum_{c\in W_\av} |N(c)|\geq \sum_{t=1}^k\big(5\sqrt{k} + 1\big) \cdot n^{(t)}_{\phrag}- \sum_{t=1}^k n^{(t)}_{\av} = \Phi(k) \geq 0.\]
and hence
the utilitarian guarantee of seq-Phragm\'{e}n is lower-bounded by $\frac{1}{5\sqrt{k} + 1}$.

Let $s$ be the first iteration where $\ell^{(s)}_{\av} > 3\sqrt{k}$; if $\ell^{(t)}_{\av} \leq 3\sqrt{k}$ for all $t\in[k]$ then we set $s=k+1$. 

First, let us consider iterations $t<s$ and show that $\Phi(t)\geq \Phi(t-1)+n^{(t)}_{\phrag} \cdot 2\sqrt{k}$.
If $w^{(t)}_{\phrag} = w^{(t)}_{\av}$, then $\Phi(t) = \Phi(t-1) +  \big(5\sqrt{k} \big) \cdot n^{(t)}_{\phrag} $.
Let us assume $w^{(t)}_{\phrag} \neq w^{(t)}_{\av}$.
We first show that $m^{(t)}_{\av} \leq \frac{\ell^{(t)}_{\av} + 1}{n^{(t)}_{\av}}$. For the sake of contradiction assume 
that $t$ is the first iteration after which 
$m^{(t)}_{\av}>\frac{\ell^{(t)}_{\av} + 1}{n^{(t)}_{\av}}$.
First note that this is only possible if indeed $w^{(t)}_{\av} \neq w^{(t)}_\phrag$.
However, by selecting $w^{(t)}_{\av}$ instead of $w^{(t)}_{\phrag}$, it can be ensured that the load does not increase above~$\frac{\ell^{(t)}_{\av} + 1}{n^{(t)}_{\av}}$, so seq-Phragm\'{e}n would have chosen $w^{(t)}_{\av}$, a contradiction. Next, observe that after $w^{(t)}_{\phrag}$ has been selected, the largest load assigned in total to a voter is at least equal to $\nicefrac{1}{n^{(t)}_{\phrag}}$. Yet, if $w^{(t)}_{\av}$ were selected, then the largest total load assigned to a voter would be at most equal to $\frac{\ell^{(t)}_{\av} + 1}{n^{(t)}_{\av}}$. Thus, it must hold that $\frac{\ell^{(t)}_{\av} + 1}{n^{(t)}_{\av}} \geq \nicefrac{1}{n^{(t)}_{\phrag}}$, which is equivalent to $n^{(t)}_{\av} \leq n^{(t)}_{\phrag}(\ell^{(t)}_{\av} + 1)$. 
It follows that $n^{(t)}_{\av} \leq n^{(t)}_{\phrag}(3\sqrt{k} + 1)$. Consequently, we have that 
\begin{align}
\Phi(t) &\geq \Phi(t-1) + \left(5\sqrt{k} + 1\right) \cdot n^{(t)}_{\phrag} - n^{(t)}_{\av} \\&\geq \left(5\sqrt{k} + 1\right) \cdot n^{(t)}_{\phrag} - \left(3\sqrt{k} + 1\right) \cdot n^{(t)}_{\phrag} = n^{(t)}_{\phrag} \cdot 2\sqrt{k} \text{.}\label{eq:pot-funct}
\end{align}

Now, we bound $\Phi(s-1)$.
Let $w=w^{(s+1)}_{\av}$, i.e., let $w$ be a candidate with the highest AV-score contained in $W_\av\setminus \{w^{(1)}_{\phrag},\dots, w^{(s)}_{\phrag}\}$; let $n_w=|N(w)|$.
Here, we divide our reasoning into the following sequence of claims:
\begin{enumerate}[(1)]
\item Observe that in step $s$, a candidate other than $w$ is selected by seq-Phragm\'{e}n and selecting candidate $w$ would increase the maximum load by at most $\nicefrac{1}{n_w}$.
As a consequence, in each iteration $t\leq s$, the maximum load increased by at most $\nicefrac{1}{n_w}$.

\item We will show that the following holds: if the maximum load in $N(w)$ increases by at least $\nicefrac{2}{n_w}$ between two iterations $t_1$ and $t_2 \leq s$, then the AV-score from voters in $N(w)$ increased between these two iterations by at least $\frac{n_w}{2}$. 
Towards a contradiction, assume that this is not the case, i.e., that between $t_1$  and $t_2$ the maximum load from voter in $N(w)$ increases by at least $\nicefrac{2}{n_w}$, and the load of more than $\nicefrac{n_w}{2}$ voters in $N(w)$ does not increase. Without loss of generality, assume that $t_2$ is the first iteration for which our assumption holds. Then, if in $t_2$ we selected $w$ and distributed its load among these more than $\nicefrac{n_w}{2}$ voters whose load has not yet increased, then the maximum load would increase by less than $\nicefrac{2}{n_w}$. This contradicts the fact that seq-Phragm\'{e}n does not choose $w$ (by definition of $w$).

\item Let us group the iterations of seq-Phragm\'{e}n before $s$ into blocks. The $i$-th block starts after the $(i-1)$-th block ends (the first block starts with the first iteration). Further, each block ends right after the first iteration which increases the maximum load assigned to a voter from $N(w)$ by at least $\nicefrac{2}{n_w}$ since the moment the block has started (thus, the last iterations may not be part of a block). Thus, in each block the maximum load assigned to a voter from $N(w)$ increases by at least $\nicefrac{2}{n_w}$.
Since in one step the load can increase by no more than $\nicefrac{1}{n_w}$, in each block the maximum load assigned to a voter from $N(w)$ increases by at most $\nicefrac{2}{n_w} + \nicefrac{1}{n_w} = \nicefrac{3}{n_w}$. Consequently, since $\ell^{(s)}_{\av} > 3\sqrt{k}$ (and so, by the pigeonhole principle, some voter from $N(w)$ is assigned the load at least equal to $\frac{3\sqrt{k}}{n_w}$), until $s$ there are at least $\sqrt{k}$ blocks. By the previous point, the total AV-score of voters increases in each block by at least $\nicefrac{n_w}{2}$.
Since there are at least $\sqrt{k}$ blocks, we have that
\begin{align*}
\sum_{t=1}^{s-1}n^{(t)}_{\phrag}\geq \sqrt{k}\cdot \nicefrac{n_w}{2}.
\end{align*}
By Equation~\eqref{eq:pot-funct}, we have that
\begin{align*}
\Phi(s-1) \geq \sqrt{k}\cdot \nicefrac{n_w}{2} \cdot 2\sqrt{k} = k n_w \text{.}
\end{align*}
\end{enumerate}
By choice of $w$, candidates not contained in $W_{\phrag}$ are approved by at most $n_w$ voters and hence
$\Phi(k) - \Phi(s-1) \geq -kn_w \text{.}$
Hence $\Phi(k) \geq 0$.
This concludes the lower bound proof.

For the upper bound we use the fact that seq-Phragm\'{e}n satisfies the lower quota property~\citep{BLS17a} and thus also weak proportionality; hence the upper bound from  \Cref{thm:proportional_guarantees} applies.
This completes the proof.
\end{proof}

\begin{proof}[Proof of the representation guarantee]
We first prove the lower bound on the representation guarantee of seq-Phragm\'{e}n. Consider an approval profile $A$, and let $W_{\phrag}$ be a committee selected by seq-Phragm\'{e}n for $A$; let $W_{\cc}$ be a committee maximizing the CC-score for $A$. Further, for each $i$, $1 \leq i \leq k$, by $W^{(i)}_{\phrag}$ we denote the first $i$ candidates selected by seq-Phragm\'{e}n. We set $n_{\cc} = |N(W_{\cc})|$ and $n^{(i)}_{\phrag}= |N(W^{(i)}_{\phrag})|$.

We will show by induction that for each $i$ it holds that $n^{(i)}_{\phrag} \geq \frac{i \cdot n_{\cc}}{k+i}$. For $i = 0$, the base step of the induction is trivially satisfied. Now, assume that for some $i$ we have $n^{(i)}_{\phrag} \geq \frac{i \cdot n_{\cc}}{k+i}$, and we consider the $(i+1)$-th step of seq-Phragm\'{e}n. Observe that there exists a not-yet selected candidate $c$ who is supported by at least $\frac{n_{\cc}-n^{(i)}_{\phrag}}{k}$ voters who do not have yet a representative in $W^{(i)}_{\phrag}$. Consider the following two cases:
\begin{description}
\item[Case 1:] $c$ is not selected in the $(i+1)$-th step. After this step the maximum load assigned to a voter is at least equal to $\frac{i+1}{n^{(i+1)}_{\phrag}}$, which is the number of chosen candidates divided by the number of voters that share their load. By selecting $c$ the load would increase to no more than $\frac{k}{n_{\cc}-n^{(i+1)}_{\phrag}}$. Consequently, we have that $\frac{k}{n_{\cc}-n^{(i+1)}_{\phrag}} \geq \frac{i+1}{n^{(i+1)}_{\phrag}}$. This is equivalent to $n^{(i+1)}_{\phrag} \geq \frac{(i+1) n_{\cc}}{k + i + 1}$.
\item[Case 2:] $c$ is selected in the $(i+1)$-th step. Then, $n^{(i+1)}_{\phrag} \geq n^{(i)}_{\phrag} + \frac{n_{\cc}-n^{(i)}_{\phrag}}{k}$. After reformulating:
\begin{align*}
n_{\cc} - n^{(i+1)}_{\phrag} \leq n_{\cc} - n^{(i)}_{\phrag} - \frac{n_{\cc}-n^{(i)}_{\phrag}}{k} = (n_{\cc} - n^{(i)}_{\phrag}) \cdot \frac{k-1}{k} \text{.}
\end{align*}
By the inductive assumption we have $n_{\cc} - n^{(i)}_{\phrag} \leq n_{\cc} - \frac{n_{\cc} i}{k+i} = \frac{n_{\cc} k}{k+i}$ and
\begin{align*}
n_{\cc} - n^{(i+1)}_{\phrag} \leq \frac{n_{\cc} k}{k+i} \cdot \frac{k-1}{k} = \frac{n_{\cc} (k-1)}{k+i} \text{.}
\end{align*}
Consequently,
\begin{align*}
n^{(i+1)}_{\phrag} \geq n_{\cc} - \frac{n_{\cc} (k-1)}{k+i} = \frac{n_{\cc} (i+1)}{k+i} \geq \frac{n_{\cc}(i+1)}{k+i+1} \text{.}
\end{align*}
\end{description}
In both cases the inductive step is satisfied, which shows that our hypothesis holds. In particular, for $i = k$, we have that $n^{(k)}_{\phrag} \geq \frac{kn_{\cc}}{k+k} = \frac{n_{\cc}}{2}$. This proves the lower bound on the representation guarantee of seq-Phragm\'{e}n.

For the upper bound we use the same construction and argument as in the proof of the representation guarantee of PAV (\Cref{thm:guarantees_for_pav}).
\end{proof}

\begin{reptheorem}{thm:phragmen_opt_bounds-combined}
\thmoptphrag
\end{reptheorem}

\begin{proof}[Proof of the utilitarian guarantee]
We first prove the lower-bound. Let $W$ be a committee selected by leximin-\phragmen{} and let $c$ be a candidate approved by most voters. For the sake of contradiction assume that the approval score of $W$ is lower than $|N(c)|$, which in particular means that $c \notin W$. Take a candidate $c' \in W$, and assume there are $x$ voters who approve $c'$ and do not approve $c$, $|N(c') \setminus N(c)| = x$. Thus, since $|N(c) > N(W)|$, there are more than $x$ voters who approve $c$ and none of the candidates from $W$. Now, consider a committee obtained from $W$ by replacing $c'$ with $c$, and the load distribution constructed from the distribution for $W$ as follows. The voters who approve both $c$ and $c'$ get from $c$ the same load as they did from $c'$. The same amount of load that was distributed from $c'$ to the $x$ voters from $N(c') \setminus N(c)$ is now distributed evenly among the voters from $N(c) \setminus N(W)$. This gives a load distribution that is more preferred according to leximin-\phragmen{} and leads to the contradiction with the optimality of~$W$.

For the upper bound consider a profile with $m = 2k$ candidates, where $n$ voters are divided into $k$ equal-size groups (we assume that $n$ is divisible by $k$): $N = N_1 \cup N_2 \cup \ldots \cup N_k$. For $i \in [k]$ candidate $c_i$ is approved by group $N_i$. The remaining $k$ candidates ($\{c_i \colon k+1 \leq i \leq 2k\}$) are identical and each of them is approved by the first $\nicefrac{n}{k}-1$ voters from $N_1$, the first $\nicefrac{n}{k}-1$ voters from $N_2$, etc. For this profile leximin-\phragmen{} selects committee $\{c_1, \ldots, c_k\}$, which in total collects $n$ approvals. On the other hand, committee $\{c_{k+1}, \ldots, c_{2k}\}$ gets $k(n-k)$ approvals. Since $n$ can be arbitrarily large, we get the ratio of $\nicefrac{1}{k}$. 
\end{proof}

\begin{proof}[Proof of the representation guarantee]
We start by proving the lower bound. Let $W$ be a committee returned by leximin-\phragmen{}, and let $W_{\cc}$ be a committee maximizing the CC-score. Let $n_{W} = |N(W)|$, $n_{\cc} = |N(W_{\cc})|$, and for the sake of contradiction assume that $2n_{W} < n_{\cc}$. Observe that the maximal load assigned to a voter is higher than $\nicefrac{2k}{n_{\cc}}$. Let $c$ be a candidate who assigns some positive amount of load to a maximally loaded voter. On the other hand, there exists a candidate $c' \in W_{\cc} \setminus W$ who is approved by at least $\nicefrac{n_{\cc}}{2k}$ voters from $N(W_{\cc}) \setminus N(W)$. By replacing $c$ with $c'$ in $W$ we can remove the load assigned from $c$ (decreasing the load of at least one voter who initially had load higher than $\nicefrac{2k}{n_{\cc}}$) and instead spread the load from $c'$ to the voters from $N(W_{\cc}) \setminus N(W)$---the total load of these voters will not exceed $\nicefrac{2k}{n_{\cc}}$. Thus, the new load distribution is preferred to the old one by leximin-\phragmen{}, a contradiction.  

For the upper bound we use the same construction and argument as in the proof of the representation guarantee of PAV (\Cref{thm:guarantees_for_pav}).
\end{proof}

\begin{reptheorem}{thm:mnimax_bounds-combined}
\thmminimax
\end{reptheorem} 
\begin{proof}
 Let us fix $k$, and consider the following instance with $4k$ candidates. The first voter's approval set is $A(1) = \{c_1, \ldots, c_{3k}\}$, the remaining $n-1$ voters approve $A(2) = \ldots = A(n) = \{c_{3k +1}, \ldots, c_{4k}\}$. For this profile MAV selects the committee consisting of $k$ candidates from~$A(1)$. Indeed, such committees result in the maximum Hamming distance of $2k$; selecting less candidates from~$A(1)$ would result in the Hamming distance equal to at least $2k + 2$.

This example shows that the utilitarian guarantee cannot be higher than $\frac{k}{(n-1)k} = \frac{1}{(n-1)}$, and that the representation guarantee cannot be higher than $\frac{1}{n}$. Since the number of voters $n$ can be arbitrarily large, we get that the two guarantees equal zero. 
\end{proof}

\section{Further Experimental Details from \cref{sec:average_guarantees}}\label{app:exp}

This section contains the plots and numerical data for the Mallows and urn datasets. 
\Cref{tab:util2,tab:egal2} show the respective boxplots; \Cref{fig:mallows-2,fig:urn-2} show the average utilitarian and representation ratios.

\begin{figure}
\includegraphics[width=\textwidth]{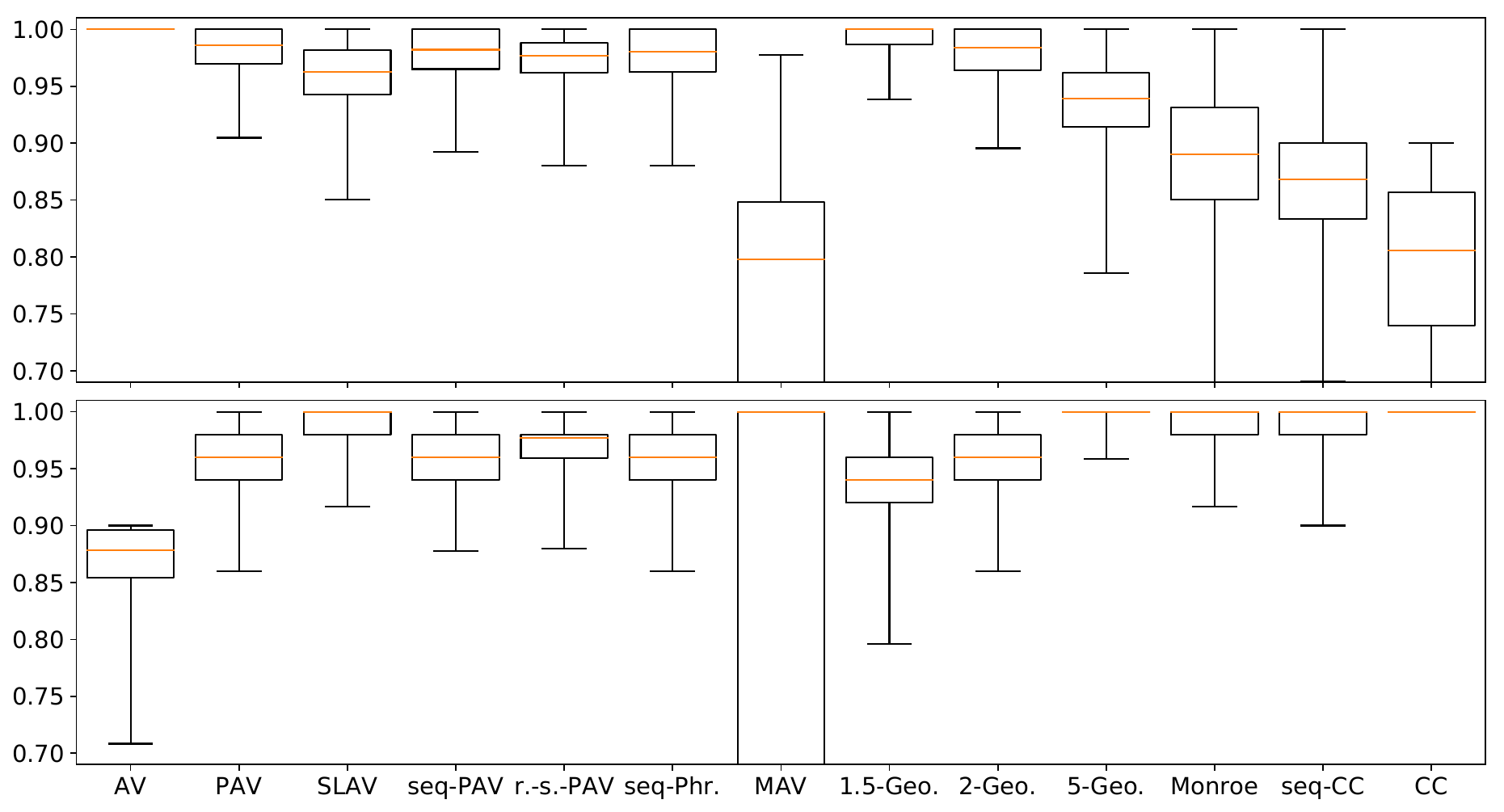}%
\caption{Results for the Mallows dataset (upper boxplot shows utilitarian ratios, the lower representation ratios).}
\label{fig:mallows-2}
\end{figure}

\begin{figure}
\includegraphics[width=\textwidth]{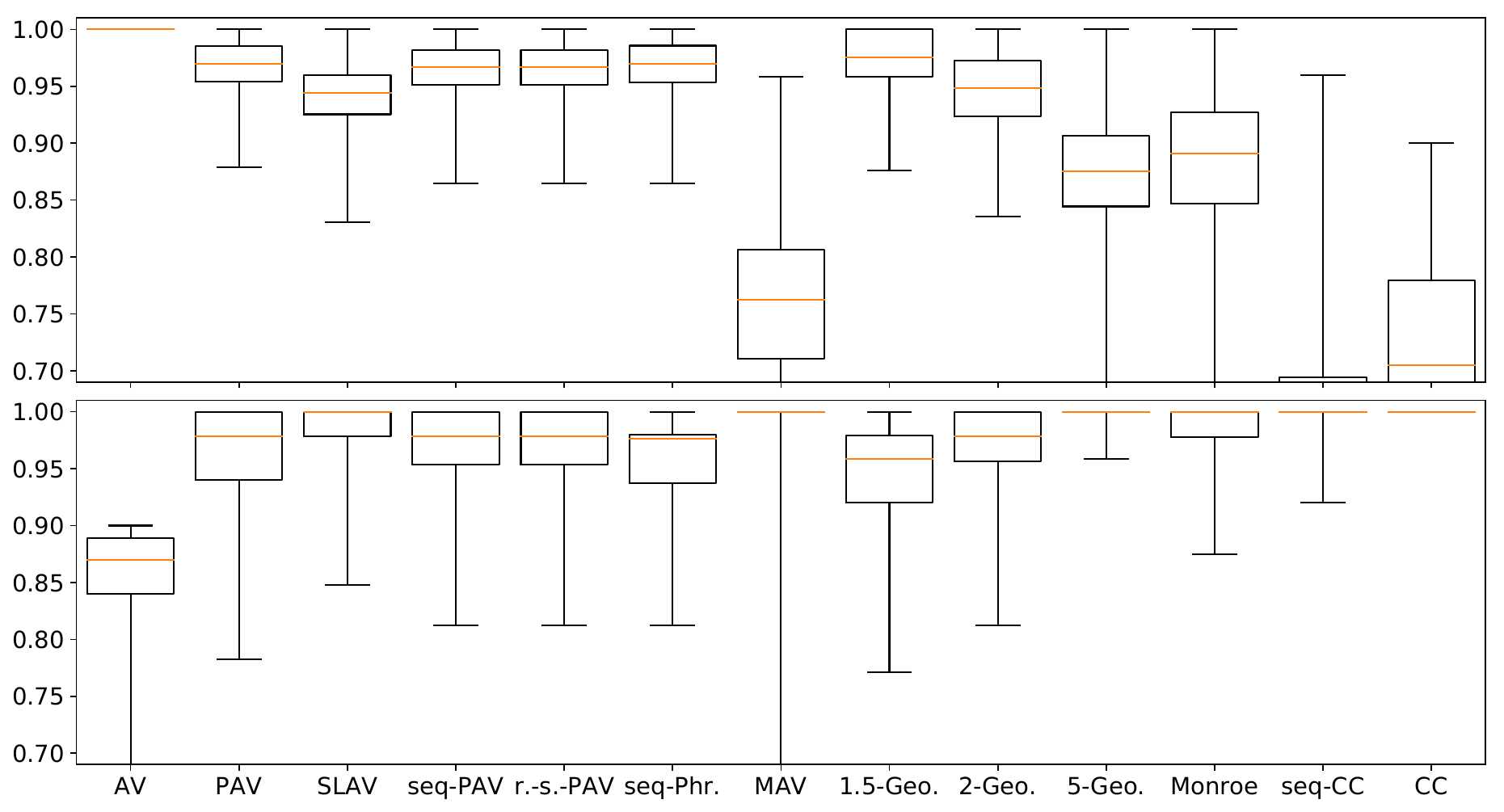}%
\caption{Results for the urn dataset (upper boxplot shows utilitarian ratios, the lower representation ratios).}
\label{fig:urn-2}
\end{figure}

\begin{table}
\begin{center}
\subcaptionbox{Mallows dataset}{
\begin{tabular}{clrc}
    &                & mean  & \\
    \toprule
 1. &AV              &1.000& \\
 2. &$1.5$-Geometric &0.993& \\
 3. &PAV             &0.982& \\
 4. &$2$-Geometric   &0.979& \\
 5. &seq-PAV         &0.979& \\
 6. &seq-\phragmen   &0.978& \\
 7. &rev-seq-PAV     &0.976& \\
 8. &SLAV            &0.960& \\
 9. &$5$-Geometric   &0.937& \\ 
10. &Monroe          &0.888& \\
11. &seq-CC          &0.866& \\
12. &CC              &0.793& \\
13. &MAV             &0.535& \\
\end{tabular}}
\qquad\qquad
\subcaptionbox{urn dataset}{
\begin{tabular}{clrc}
    &                & mean  & \\
    \toprule
 1. &AV              &1.000& \\
 2. &$1.5$-Geometric &0.974& \\
 3. &PAV             &0.968& \\
 4. &seq-\phragmen   &0.968& \\
 5. &rev-seq-PAV     &0.965& \rdelim{]}{2}{1mm}[]\\
 6. &seq-PAV         &0.965& \\
 7. &$2$-Geometric   &0.947& \\
 8. &SLAV            &0.942& \\
 9. &Monroe          &0.883& \\
10. &$5$-Geometric   &0.875& \\ 
11. &MAV             &0.723& \\
12. &CC              &0.644& \\
13. &seq-CC          &0.602& \\
\end{tabular}}%
\end{center}%
\caption{Mean utilitarian ratios for the Mallows and urn dataset. The differences between pairs of rules are statistically significant (paired t-test, $p=0.01$), unless a pair is marked with a bracket.}\
\label{tab:util2}
\end{table}

\begin{table}
\begin{center}
\subcaptionbox{Mallow dataset}{
\begin{tabular}{clrc}
    &                & mean  & \\
    \toprule
 1. &CC             & 1.000& \\
 2. &$5$-Geometric  & 0.999& \\
 3. &Monroe         & 0.991& \rdelim{]}{2}{1mm}[]\\
 4. &seq-CC         & 0.990& \\
 5. &SLAV           & 0.988& \\
 6. &rev-seq-PAV    & 0.969& \\
 7. &seq-PAV        & 0.964& \rdelim{]}{2}{1mm}[]\\
 8. &$2$-Geometric  & 0.964& \\
 9. &PAV            & 0.963& \\
10. &seq-\phragmen  & 0.960& \\
11. &$1.5$-Geometric& 0.944& \\
12. &AV             & 0.868& \\
13. &MAV            & 0.633& \\
\end{tabular}}
\qquad\qquad
\subcaptionbox{urn dataset}{
\begin{tabular}{clrc}
    &                & mean  & \\
    \toprule
 1. &CC             & 1.000& \\
 2. &$5$-Geometric  & 1.000& \\
 3. &seq-CC         & 0.999& \\
 4. &SLAV           & 0.985& \\
 5. &Monroe         & 0.983& \\
 6. &$2$-Geometric  & 0.971& \\
 7. &rev-seq-PAV    & 0.967& \rdelim{]}{2}{1mm}[]\\
 8. &seq-PAV        & 0.967& \\
 9. &PAV            & 0.965& \\
10. &seq-\phragmen  & 0.960& \\
11. &$1.5$-Geometric& 0.951& \rdelim{]}{2}{1mm}[]\\
12. &MAV            & 0.946& \\
13. &AV             & 0.860& \\
\end{tabular}}%
\end{center}%
\caption{Mean representation ratios for the Mallows and urn dataset. The differences between pairs of rules are statistically significant (paired t-test, $p=0.01$), unless a pair is marked with a bracket.}
\label{tab:egal2}
\end{table}	

\newpage
\section{Proof Details from \Cref{sec:efficiency_axiom}}\label{app:axiom}

\begin{example}
Consider the following profile with 20 voters and 4 candidates, where:
\begin{align*}
& N(c_1) = \{2, \ldots, 10\}; \quad && N(c_2) = \{11, \ldots, 19\}; \\
& N(c_3) = \{6, \ldots, 15\}; \quad && N(c_4) = \{2, 3, 4, 16, 17, 18, 19\}.
\end{align*}
For this profile and for $k=2$ the Greedy Monroe rule first picks $c_3$, who is approved by 10 voters, then removes these 10 voters, and picks $c_4$. However, committee $\{c_3, c_4\}$ is dominated by $\{c_1, c_2\}$. The same example shows that seq-CC and seq-PAV do not satisfy Pareto efficiency.
\end{example}

\begin{repproposition}{prop:AV_CC_PAV_satisfy_eff}
\propositionParetoSatisfiedAVCCPAVMAV
\end{repproposition}
\begin{proof}
Let $\calR \in \{\text{AV}, \text{CC}, \text{PAV}, \text{$p$-Geometric}\}$.
For the sake of contradiction let us assume that there exists $k \in \naturals$, profile $A \in \calA$, and a committee $W \in \mathcal{S}_k(C)$ such that $W$ dominates each committee from $\calR(A, k)$. In particular, this means that $W$ has strictly lower score than some committee $W_{\opt} \in \calR(A, k)$. Thus, there exists a voter $i \in N$ that assigns to $W_{\opt}$ a higher score than to $W$. However, this is not possible since for each of the considered rules the score that $i$ assigns to a committee $W'$ is an increasing function of $|W' \cap A(i)|$, a contradiction.  

Now, consider MAV, and towards a contradiction assume that there exists $k \in \naturals$, profile $A \in \calA$, and a committee $W \in \mathcal{S}_k(C)$ such that $W$ dominates each committee from $\text{MAV}(A, k)$. Consider a committee $W' \in \text{MAV}(A, k)$. Since $W$ dominates $W'$ for each voter $i \in N$ it holds that $d_\hamming(A(i), W) \leq d_\hamming(A(i), W')$. Thus, $\max_{i \in N} d_\hamming(A(i), W) \leq \max_{i \in N} d_\hamming(A(i), W')$. Consequently, $W$ must be a winning committee according to MAV, and so, $W$ must dominate itself---a contradiction.
\end{proof}

\end{document}